 \newcommand{\norm}[1]{\left\lVert#1\right\rVert}
\theoremstyle{definition}
\newtheorem{definition}{Definition}[section]
\theoremstyle{theorem}
\newtheorem{theorem}{Theorem}
\theoremstyle{remark}
\newcommand{\power}{\textsc{PowerAlert}\xspace}
\begin{document}
\title{\power: An Integrity Checker using Power Measurement}
\author{\IEEEauthorblockN{Ahmed M. Fawaz, Mohammad A. Noureddine, and William H. Sanders}
\IEEEauthorblockA{Department of Electrical and Computer Engineering \\
University of Illinois at Urbana-Champaign\\
\{afawaz2, nouredd2, whs\} @illinois.edu
}
}


\maketitle

\begin{abstract}
We propose \power, an efficient external integrity checker for untrusted hosts.
Current attestation systems suffer from shortcomings in requiring complete checksum of the code segment, being static, use of timing information sourced from the untrusted machine, or use of timing information with high error (network round trip time).
We address those shortcomings by (1) using power measurements from the host to ensure that the checking code is executed and (2) checking a subset of the kernel space over a long period of time.
We compare the power measurement against a learned power model of the execution of the machine and validate that the execution was not tampered.
Finally, power diversifies the integrity checking program to prevent the attacker from adapting.
We implement a prototype of \power using Raspberry pi and evaluate the performance of the integrity checking program generation. We model the interaction between \power and an attacker as a game. We study the effectiveness of the random initiation strategy in deterring the attacker. The study shows that \power forces the attacker to trade-off stealthiness for  the risk of detection, while still maintaining an acceptable probability of detection given the long lifespan of stealthy attacks. 
\end{abstract}




\section{Introduction}
Computer systems are managing most aspects of our lives including critical infrastructure, communication, finance, and health care.
Those computers enable better control of the systems achieving more efficiency while promising reliability and security.
In reality, the promise of security is elusive and is often disrupted by the new exploits and attacks.
Over time, the attacks are getting more sophisticated, targeted, and elusive.
Since the most secure systems are bound to be compromised, intrusion resiliency as a protection strategy has a better chance at improving security.
The resiliency strategy considers compromises inevitable; it moves to detect attacks and devises methods to control a compromise while maintaining an acceptable level of service.
Thus critical to such a strategy is the ability to detect compromises and devise methods to find optimal responses that modify the system to maintain security and service goals.

In this work, we tackle the problem of software integrity checking in the face of Advanced Persistent Threat (APT).
APTs are sophisticated, targeted attacks against a computing system containing a high-value asset~\cite{apt:holmes}.
APTs slowly perform a series of steps in order to gain access and perform an attack. The sequence of steps is often called the kill chain~\cite{killchain} and starts from social engineering to gain credentials,  command-and-control through backdoors, lateral movement to find the high-value asset, and data exfiltration or manipulation. Known APTs, such a Stuxnet, are slow and stealthy with operations spanning months to years to achieve the desired goal. APTs require meticulous planning and tremendous resources; typically available to nation state actors.

A resiliency strategy is vulnerable to a well-planned adversary. The adversary will manipulate monitoring information, necessary for intrusion detection, leading to a false state of security. We propose a system that tackles the following question:
\emph{How to validate the integrity of software against a slow and stealthy attacker without any trusted components in the machine?}

We believe that any pure software solution to the integrity problem cannot address the problem as it faces the Liar's paradox.
In this paradox a liar states that they are lying; for example declaring ``This statement is false". In trying to verify the statement, we reach a contradiction. In a computer setting, a software assesses the state of the machine where it dwells. The paradox arises when the machine declares itself compromised. Even though an attacker does not have an incentive to declare a compromised state, the paradox remains an issue.

For a checker to avoid the paradox, it should (1) be independent of the machine to be checked and (2) use a trust base that cannot be exploited by an attacker. Today's golden standard in security uses in-machine tamper resistant chips (such as TPM or AMT) that hold secrets to generate chains of trust. Those solutions are dependent on the untrusted machine and are closed source,  making them vulnerable to undiscovered exploits.

In order to address the problem, we propose \power, a low-cost out-of-box integrity checker that uses the physics of the machine as a trust base. Specifically, \power directly measures the current drawn by the processor and uses current models to validate the behavior of the untrusted machine. The intuition is that an attacker attempting evasive maneuvers or deception will have to use extra energy thus drawing extra current. \power uses direct measurement of the current and avoids using sensors on-board the untrusted machine as those could be tampered. Moreover, timing information extracted from the signal is very accurate (as opposed to network round trip time which is dependent on the network conditions). 

\power tackles the classical problem of the static defender. A static defender is always at a disadvantage with respect to an attacker, as the attacker can learn the protection mechanism, adapt, and evade the defender. We use a dynamic integrity checking program (IC-Program) that is generated each time \power attempts to check the integrity of the machine. Diversity of the IC-Program evens the playing field between the attacker and defender.

Each time \power decides to check the integrity of the machine it initiates the \power-protocol. It randomly generates the IC-program and a nonce, and sends the pair to the machine. Meanwhile, it starts measuring the current drawn by the processor. The untrusted machine is expected to load the program, to run it, and to return the output.
\power validates the observed behavior by comparing the current signal to the learned current model. 
The IC-Program traverses a small set of addresses and hashes them using a randomly generated hash function. The output of the IC-Program is similarly validated. We show that a low-cost low-power device can efficiently generate the IC-Programs and that the space of IC-Programs is almost impossible to exhaust.

We reduce the overhead of continuous integrity checking by performing the checks in small batches over a small segment of the state of the machine. We model the interaction between the attacker and a \power verifier using a continuous time game and simulate it over a period of 10 days. The attacker attempts to evade the integrity checks by disabling its malicious activities at randomly chosen time instants. 
Our results show the \power verifier forces 
the attacker into a trade-off between stealthiness and the 
risk of detection. An attacker that wants to remain stealthy would need to disable its activities for longer periods of time, while an attacker seeking longer activity 
periods incurs high risks of detection. We also show that if the verifier is moving too slowly (i.e., not making too many checks), the attacker can achieve stealthiness and a low risk of detection while taking less frequent actions.

The paper is organized as follows: the Problem description (Section~\ref{sec:problem}), the System Model and Threat Model (Section~\ref{sec:sys}),  the \power-Protocol (Section~\ref{sec:proto}), the Generation of IC-Programs (Section~\ref{sec:hash}), the method for current signal processing and model learning (Section~\ref{sec:powerana}), the attacker-verifier game (Section~\ref{sec:game}), the Evaluation (Section~\ref{sec:eval}), a security analysis (Section~\ref{sec:security}), Related Work (Section~\ref{sec:related}), and Conclusion (Section~\ref{sec:conclusion}).

\section{Problem Description}
\label{sec:problem}
We tackle the problem of low-cost trustworthy dynamic integrity checking of software running on an untrusted machine.
The goal of the integrity checker is to detect unwanted changes in the known uncompromised static state of a system, while being resilient to attacker evasion and deception.
\begin{definition}{System State.}
Let $X_t=(x_t(0),x_t(1),\ldots, x_t(n))$ be the state of a system at time $t$. The set of locations $\mathcal{L}$ defines the memory locations (addresses) in the state such that for $l\in\mathcal{L}:X_t(l)=x_t(l)$.
\end{definition}
Let $X_t^g$ be the known uncompromised state of the system and $X_t^r(t)$ be the current state of the system. A location $l_c$ is compromised iff $X_t^g(l_c)\neq X_t^r(l_c)$. Let $L_c\in\mathcal{L}$ be the set of compromised addresses. A machine is compromised iff $|L_c|>0$.

The integrity checker detects if a system is compromised by comparing it to a known uncompromised state at time $t$.
\begin{definition}{Integrity Checker.}
Let $f:(X\times X \times \mathcal{L}) \rightarrow \{0,1\}$ be the integrity checking function. $f(X^g,X^r, L)$ returns $1$ if the any of the locations in $L$ are compromised, otherwise it returns $0$.
\end{definition}

Let $T_c$ be the instance of time at which a system is compromised and $L_c$ be the set of locations of compromise in said system.

The integrity checking problem is concerned with finding a sequence of times $T=(t_1,t_2,t_3, \ldots)$ to design and run an integrity checker $f$ for a subset of locations $L=(L_1,L_2,L_3,\ldots)$, such that the compromise is eventually detected with minimum overhead. That is $f(X^g(t_i),X^c(t_i),L_i)=1$ such that $t_i>T_c$  and $L_i\in L_c$.
Finally, the integrity checking process should be validated using side information $i(t)$ in order to avoid deception by an adversary. 

Two trade-offs between effectiveness and performance emerge due to the selection of addresses to be checked and the sequences of times to perform the checks. 
First, the more addresses are checked the higher the chance that the integrity checker will detect the compromise (due to higher coverage), but that comes at a higher cost for the machine and the checker.
Second, if the frequency of checks is high then the checker has a higher chance of detection, but that comes at a higher cost for the machine.

Finally, the integrity checker should be resistant to attacker deception. The deception can be done either by supplying modified answers, disabling the security checking altogether, or disabling the alerting capabilities (that is in case the checker does detect that alerting is disabled). We study the effect of the strategy of the integrity checker in detecting an attacker that attempts to hide in anticipation of a check in Section~\ref{sec:game}.


\section{System Description}
\label{sec:sys}
In order to address the problem of dynamic integrity checking of software (mainly the static memory in the kernel) on an untrusted machine, we propose \power,  an out-of-box device that checks the integrity of an untrusted machine. In this section, we describe our approach for the solution explaining the architecture of \power, protection assumptions, and the threat model.
\subsection{Solution Approach}
On a high level, \power is a trusted external low-cost box tied to the untrusted machine.
Figure~\ref{fig:arch} shows the architecture of \power. The box runs a verification protocol, \power-protocol, on the untrusted machine.
Briefly, the protocol sends a randomly generated integrity checking program, called the IC-program, and a nonce to the untrusted machine.
The machine runs the program, which hashes parts of the static memory, and returns a response to \power. \power checks the response and compares it to the known state of the untrusted machine.
In order to validate that only the IC-program is running, \power measures the current drawn by the processor of the machine and compares it to the current model for normal behavior.
The power model is specific to the processor model and thus has to be learned for each machine. During the initialization of \power, the machine is assumed uncompromised. \power instruments the machine by measuring the current drawn by the processor while running operations semantically similar to the \power-protocol. \power learns a power model specific to the machine that is later used for validation.
\begin{figure}
\center
\includegraphics[width=\columnwidth]{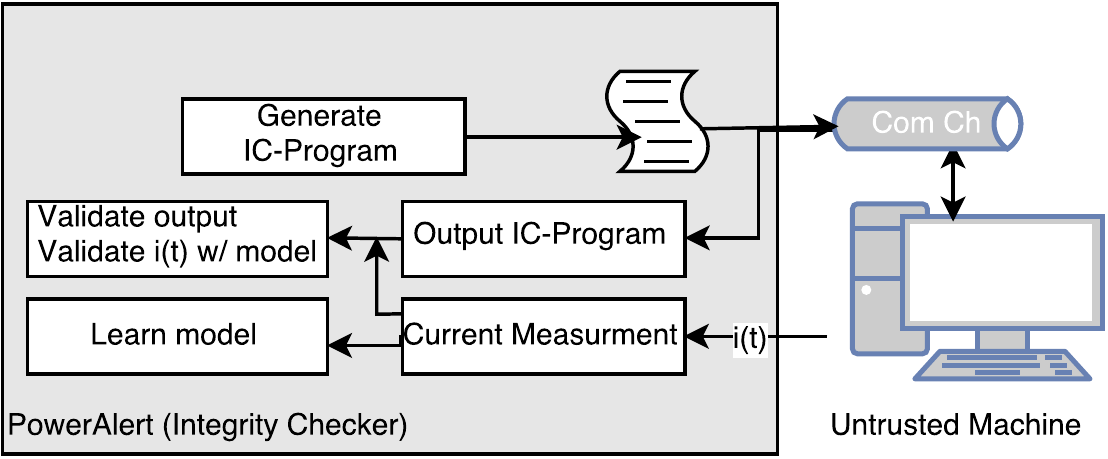}
\caption{The components of PowerAlert.}
\label{fig:arch}
\end{figure}
Even while using the current signal for validation, attackers might evade detection by modifying the integrity checking program.
Researchers typically attempt to implement the most optimized version of the checking program, thus forcing the attacker to incur extra clock cycles for evasion.
Instead, we take a different approach by randomly generating an IC-program every time the \power-protocol is initiated.
The diversity of the IC-program prevents the attacker from adapting, thus making deception harder.
\subsection{Threat Model}
\label{sec:threat}
We assume a fairly powerful attacker when it comes to the untrusted machine; the attacker has complete control over the software. 
However, we assume that the attacker does not modify the hardware of the machine; for example, the attacker does not change the CPU speed, or modify firmware. We do not assume any trusted modules or components on the machine to be tested. Our trust base is derived from the randomness of the protocol and physical properties of the CPU.

We assume that the attacker 
runs deceptive countermeasures to hide her presence and deceive the verifier into a false sense of security, and attempt to reverse-engineer the integrity-checking program for future attempts. Specifically, the attacker can do the following:
\begin{itemize}
\item Reverse engineer the program through static and dynamic analysis. Static analysis allows the attacker to figure out the functionality of the program without running the program. With dynamic analysis, the attacker attempts to understand the functionality of the program if it was obfuscated. 
\item Tamper with the program or the state of the machine in order to supply the checker with the ``correct'' answer. In this case, the attacker needs to understand the functionality of the IC-Program.
\item Impersonate a clean system state by either creating a new program that has similar behavior or using a virtual machine.
\end{itemize}

We aim for our approach to resist the following attacks:
\begin{itemize}
\item \textbf{Proxy attack:} The attacker uses a proxy remote machine with the correct state to compute the correct checksum and returns the result to the verifier.
\item \textbf{Data Pointer redirection:} The attacker attempts to modify the data pointer that is loaded from memory.
\item \textbf{Static analysis:} The attacker analyzes the IC-program to determine its control flow and functionality within the time needed to compute the result. The attacker can precompute and store the results, find location of memory load instructions, or find efficient methods to manipulate the IC-program~\cite{Shaneck2005}.
\item \textbf{Active analysis:} The attacker instruments the IC-Program to find memory load instructions in order to manipulate the program.
\item \textbf{Attacker hiding:} The attacker uses compression~\cite{Li:2011} or ROP storage~\cite{ROP:hide:2009} in data memory to hide the malicious changes when the \power-protocol is running.
\item \textbf{Forced retraining:} The attacker forces \power to retrain models by simulating a hardware fault resulting in a change in hardware.
\end{itemize}

We alleviate those threats by changing the IC-Program on every check, flattening the control structure of the IC-Program, and observing the current trace for abnormal behavior.



\subsection{Assumptions}
In this work, we assume that \power is a trusted external entity. Having \power be an external box as opposed to being an internal module aids in separating the boundaries between the entities. The clear boundary allows us to find a clear attack surface, enables easier alerting capabilities, and easier methods to update \power when vulnerabilities or new features are added. 
Moreover, we assume that the communication channel between \power and the untrusted machine is not compromised. While this assumption can be relaxed by using authentication, we opt to address it in future work.

We assume that \power has a truly random number generator that cannot be predicted by an attacker. Unpredictability is essential for the integrity checking function to be effective. The attacker should not be able to predict the defender's strategy for initiating the \power-protocol. Moreover, the attacker should not be able to predict the IC-Program that will be generated. If the attacker can predict the program, then the attacker can adapt and deceive \power. 

We also assume that \power has complete knowledge of the normal state of the machine. \power uses the known state to verify the output from the untrusted machine. 

Finally, the current measurements are part of our trust based. Those measurements are directly acquired and thus they cannot be tampered with; the learned models are based on the physical properties of the system which cannot be altered. Any attacker computation, such as static analysis of the IC-program, will manifest in the current signal.





\section{\power Protocol}
\label{sec:proto}
We model the interaction between \power and the untrusted machine as an interrogation between a verifier and prover. We name the protocol that defines the interaction the \power-protocol. The goal of the checker is to verify that the prover has the correct proof; in this case, we are interested in the state of the kernel text and data structures. On a high-level, the verifier requests the state of a random subset of the kernel state and the prover has to produce the results. Instead of directly requesting the memory locations, the verifier sends a randomly generated function that hashes a subset of the kernel state. The verifier correlates current measurement and side-channel information with the expected runtime of the sent function.
The \power-protocol is repeated over time; positive results increase confidence that the kernel's integrity is preserved. In the following, we describe the interactions in the \power-protocol.

Figure~\ref{fig:protocol} shows the interactions when the \power-protocol is initiated. At a random instance in time, based on the initiation strategy developed in section~\ref{sec:game}, the verifier initiates the \power-protocol. The verifier starts by randomly generating a hash function $f$, a random function to generate a random set of address positions $\mathcal{L}$, and a nonce $\eta$. In this setting, the hash function $f$ is the IC-program. The verifier connects to the prover and sends the random parameters $< f, \mathcal{L}, \eta>$.
The prover is then supposed to load the hash function, $f$, and run it with inputs $\mathcal{L}$ and $\eta$. Meanwhile, \power measures and records the current drawn by the processor $i(t)$.
Subsequently, the prover sends the output of the hash function back to the verifier. 
Finally, the verifier stops recording the current trace, confirms the output, and validates the expected execution with $i(t)$-- the measured current drawn by the processor.

The verifier introduces uncertainty by changing the hash function, order of-, the subset of-addresses, and the nonce. The uncertainty makes it extremely hard for a deceptive prover to falsify the output. Changing the hash function prevents the attacker from adapting to the verifier's strategy; changing the addresses and nonce prevents the attacker from predicting the verifier's target.

In the following sections, we define the method for generating the hash functions, the strategy for picking a subset of memory addresses, and the method for measuring current and trace correlation.

\section{Integrity Checking Program}
\label{sec:hash}
\power uses the Integrity Checking program, IC-program, to check the integrity of the untrusted machine. We use diversity instead of a static well-designed IC-program.
It is typical for designers to engineer a static well-designed IC-programs. The design typically focuses on hash collision resistance and seek an optimal runtime implementation. A hash collision allows the attacker to find a system state that hashes to the correct value. Thus hash collision resistance ensures that the attacker does not deceive the verifier. 
An optimal runtime implementation ensures that a \emph{memory redirection attack} significantly increases the runtime of the IC-program. A memory redirection attack is when the attacker keeps an uncompromised copy of the state; all memory reads are redirected to the copy.

Once the static IC-program is used the attacker adapts and finds evasive methods to deceive the verifier. A hash function previously thought collision resistant might become vulnerable. Moreover, an implementation once thought optimal might be evaded by an ingenious attacker.
This is the problem of the static defender, the attacker can always find a method to circumvent the protection mechanism. In this arms race, even if the attacker is detected the first time the protection method is revealed, the attacker will adapt and find new methods to hide. An attacker will have enough resources beyond the compromised machine to adapt.

In this work, we take a different approach to address the problem. Instead of building the strongest mechanism possible, we build a changing mechanism that prevents the attacker from adapting.
Specifically, we randomly generate a new IC-program each time the \power-Protocol is initiated and  choose a randomized input set. The input set is drawn randomly from the address space of kernel text and read-only data.

We want to force the untrusted machine to run the IC-program without modification. The IC-program has to be resistant to active and passive (static) analysis:
to counter active analysis, we change the program every time and thus make it hard for the attacker to catch-up;
to counter passive analysis, the program is lightly obfuscated by flattening the control flow structure so that attacker analysis will show up in the power trace. We present the method for 
generating the IC-program in the following sections. 

\subsection{IC-Program Structure}
\label{sec:icprogram}
The IC-Program's purpose is to hash a subset of the state of the untrusted machine, in order to assess the integrity of the machine. The general flow of the program is a loop that reads a new memory location and updates the state of the hash function.

\begin{algorithm}
\caption{IC-Program Pseudocode}
\label{algo:icprogram}
\begin{algorithmic}
\Require Address space size $N$ and nonce $\eta$
\State Initiate hash function with $\eta$, $h=f(h,\eta)$
\For{$n\in [0,N]$}
\State A := generate random address
\State x := load A
\State Update hash, $h=f(h,x)$
\EndFor
\end{algorithmic}
\end{algorithm}

We obfuscate the high-level structure by flattening the control graph of the program using the technique in~\cite{Wang:2001}. The obfuscated program makes it harder for the attacker to locate the load instructions necessary for a memory redirection attack. Any static or active analysis will be observed on the power trace and thus can be detected.

It is important to note that the program that is randomly generated is not polymorphic, that is the functionality of the program changes, not just the structure.


\subsection{Populating the Addresses}
Given the static portion of the kernel's virtual address space which starts from $L_{low}$ to $L_{high}$, a selection algorithm selects an ordered subset of the space for integrity checking.
The ordered subset is a list of address tuples; each tuple contains an address and the number of words to read \verb|<base address, words>| (the size of the word is equivalent to the number of bytes 4/8).
For example, the tuple \verb|<0xffffffff81c000000, 4>| reads 4 words starting from address \verb|0xffffffff81c000000|.
The output of the selection algorithm is a list of the following form: $A=<A_1,k_1>,\ldots <A_j,k_j>$; the expanded form of the list is $A_1, A_1+1, \ldots, A_j+1, \ldots ,A_j+k_j$.
The selection algorithm takes as an input the total number of bytes $N$.
The algorithm generates a random list of address tuples such that $\sum k_i=N$.
The selection algorithm is embedded in the IC-Program; it is simply a Linear Feedback Shift Register (LFSR).
Another important property of the address list is coverage; we define coverage as the fraction of selected bytes over the total size of the system.
Specifically $cov(A)=N/(L_{high}-L_{low})$. The coverage affects the cost of running the IC-Program, and the probability that a given round of \power-Protocol checks a compromised address.

\subsection{LFSR generation}
A new hash function is used for every run of the protocol. We propose chaining randomly generated LFSRs, the outputs of which are combined using a nonlinear Boolean function. Figure~\ref{fig:hash} shows the high-level configuration of the hash function.
Each LFSR is enabled depending on the address being processed. The outputs of the LFSRs are accumulated with the data in a $k$ bit vector. In the following, we explain the method for generating and chaining the LFSRs. The LFSRs are generated using irreducible polynomials in a Galois Field. The configuration of LFSRs is generated using a random tree that specifies the control flow of the program.
\begin{figure}
\includegraphics[width=\columnwidth]{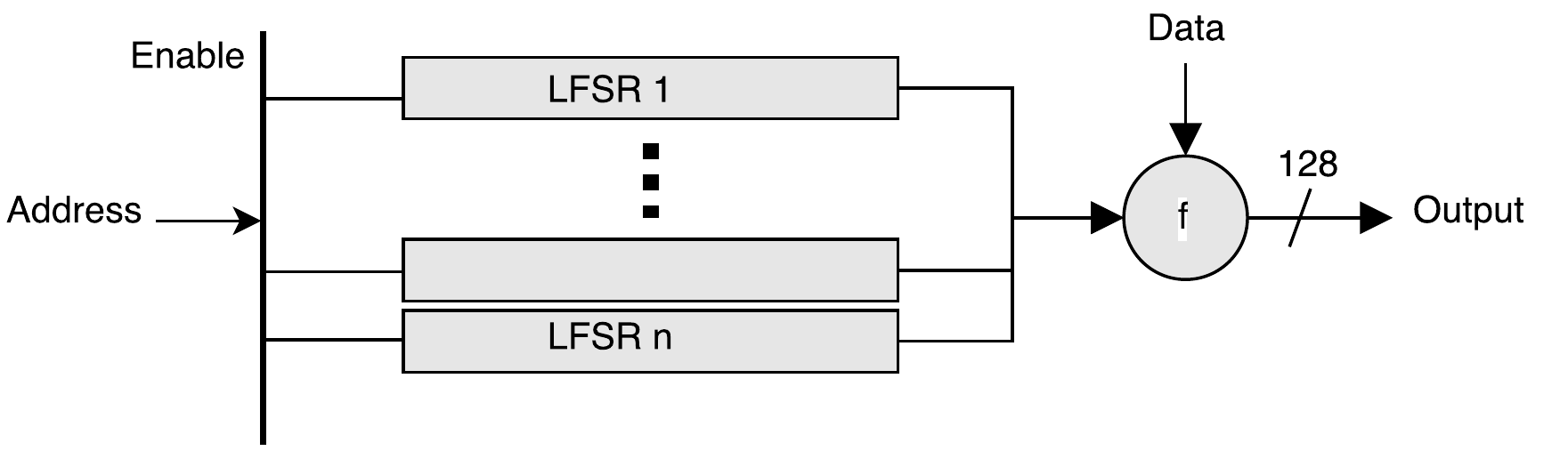}
\caption{General Architecture of the Hash Function}
\label{fig:hash}
\end{figure}

An LFSR is related to polynomials in Galois field $GF(2)$. The process for generating maximal LFSRs uses irreducible polynomial $p(x)$ of degree $n$.
A maximal LFSR has the highest period, the period of the LFSR is the time it takes for the register to return to its initial state. A short period makes it easier to predict the output.
A polynomial is irreducible if $x^{2^n}= x \mod p(x)$.
For a polynomial $p(x)$, the n-bit Galois LFSR is constructed by tapping the positions in the register that are part of $p(x)$. In operation, bits that are tapped get XOR'ed with the output bit and shifted, while untapped bits are shifted without changed. The output bit is the input to the LFSR.
In algorithm~\ref{algo:benor}, we generate random polynomials and apply the Ben-Or irreducibility test~\cite{Gao:1997}. The polynomials are generated by sampling the uniform distribution, $unif(1,2^n-1)$. For example, $p=123$  with binary representation $01111011$ encodes $p(x)=1+x+x^3+x^4+x^5$.
The worst case runtime complexity of the Ben-Or algorithm is $O(n^2\log^2(n)\log\log(n))$. However, the Ben-Or algorithm is efficient as per our experiments in Section~\ref{sec:eval:perf:lfsr}.

\begin{algorithm}
\caption{Irreducible polynomial generation using Ben-Or Irreducibility Test}
\label{algo:benor}
\begin{algorithmic}
\While {true}
\State Generate poly $p(x)\in GF(2)$ of degree at most $n$
\For{i := 1 to $n/2$}
\State g := gcd(p, $x^{2^i}-x \mod p$);
\If {$g \neq 1$}
\State '$p$ is reducible'
\State break
\EndIf
\EndFor
\State\Return '$p$ is irreducible'
\EndWhile
\end{algorithmic}
\end{algorithm}


\subsection{LFSR Chaining}
For a set of $N$ LFSRs, the goal of the chaining strategy is to define the logic for enabling the LFSRs.
The input of the enable logic is the memory address being processed, not the data itself. By using the memory address, the attacker will have a harder time to perform a memory redirection attack. The logic is constructed by creating a random binary tree of depth $n$. The tree defines the control flow of each loop in the IC-program. The control variable at each level is a unique memory address bit.

Each level decides if an LFSR is enabled or not. For each node, an LFSR is enabled/disabled, and then the program counter jumps to either child by comparing an address bit.
\begin{verbatim}
enable LSFR i
if (a[i] == true)  go to child 1
if (a[i] == false) go to child 2
\end{verbatim}
In case the node only has one child, the jump instruction will be omitted for a continuous execution. Figure~\ref{fig:control} shows the structure of the generated tree.
\begin{figure}
\centering
\includegraphics[width=0.8\columnwidth]{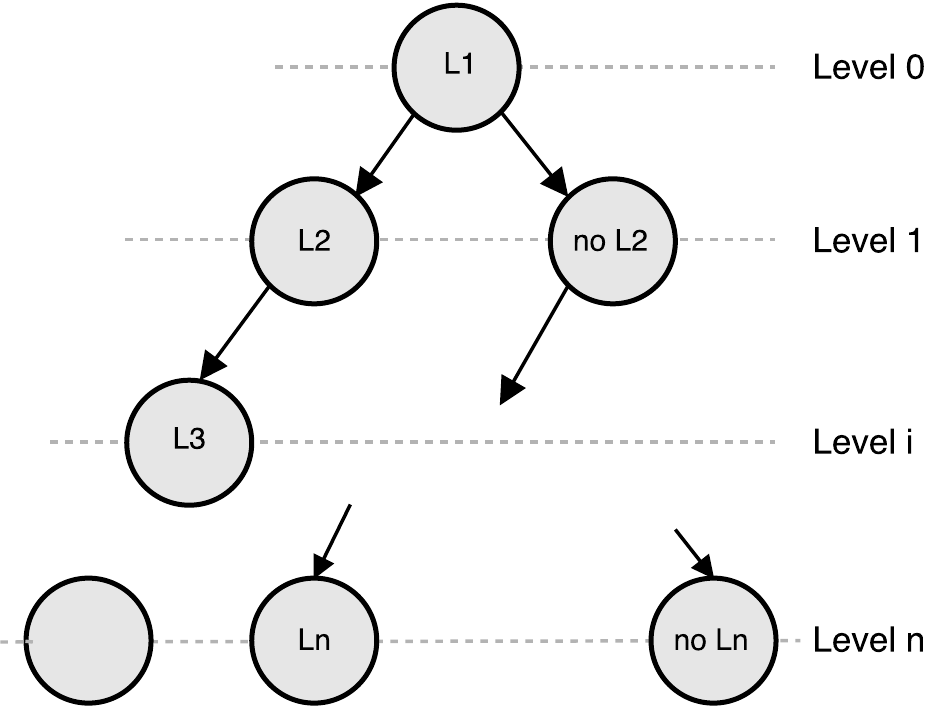}
\caption{Control Structure}
\label{fig:control}
\end{figure}


By providing a program with a randomly generated structure, the attacker cannot predict the program's structure based on previous runs. The attacker is required to perform static analysis on the program (instructions) if she is to attempt to evade the checking algorithm.

\section{Power Analysis}
\label{sec:powerana}
We verify the execution of \power-protocol using the current drawn by the processor. We learn the normal Power Finite State machine (PFSM) model using training data from the machine. Then for each round of the \power-protocol we extract the power states and confirm that they are generated by the normal model.

In the following section we explain the method for current measurement, the method to extract the power states from a current signal, the high-level PFSM model, the method to learn the normal parameters of the model, and the method for validation. Finally, we use the learned model to aid in the parameter selection for IC-program generation.

\subsection{Measurement Method}
The current drawn by the processor is measured using a current measuring loop placed around the line, shown in Figure~\ref{fig:loop}. Our setup works for computers with motherboards that have separate power line for the processor.
Our generation and verification algorithms are not limited to any sampling rate; in fact, the algorithms can be adapted for any sampling rate depending on the needed accuracy.
We measure the current directly by tapping the line from the power supply to the CPU socket on the motherboard; as opposed to measuring the power usage by using the instrumentation provided by the processor as the data will pass through the untrusted software stack.
Such data is susceptible to manipulation and cannot be trusted as an absolute truth.
On the other hand, direct measurement provides a trusted side channel that we use to validate that an untampered \power-protocol is executed.

The measured current signal is either stored for model learning or processed in near real-time for \power-protocol execution validation.
\begin{figure}
\includegraphics[width=\columnwidth]{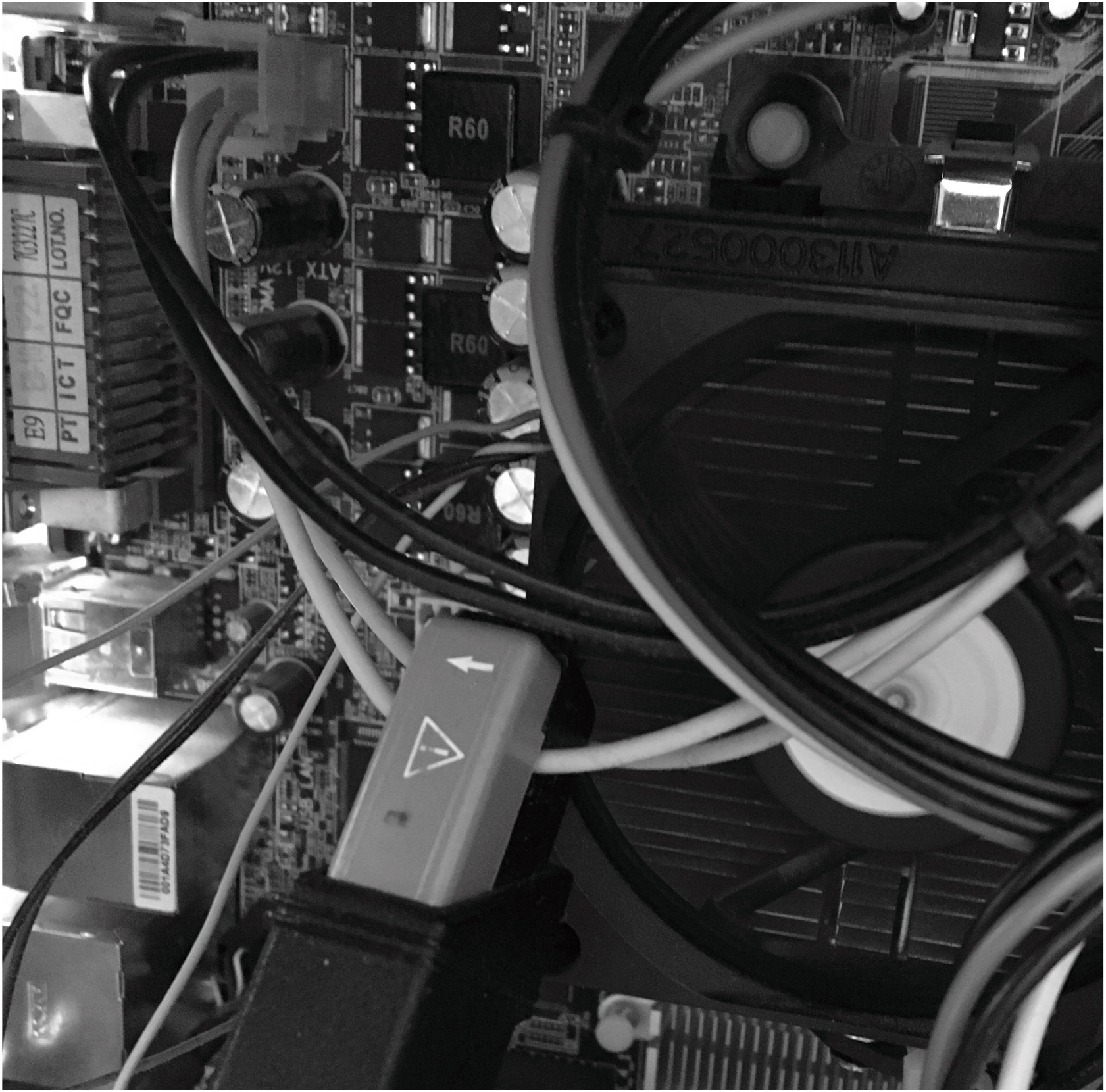}
\caption{The current measurement loop place around the CPU power line.}
\label{fig:loop}
\end{figure}

\subsection{Extracting the Power States}
\label{sec:p:xtract}

We observe that the current drawn by a processor during an operation takes the form of multilevel power states, where each state draws a constant current level. Such behavior is consistent with the way a processor work: different operations use different parts of the processor's circuitry. As each part of the processor switches dynamic current passes through the transistors. Thus different combinations of the circuitry will draw different current levels. 
Thus to learn the power models of operations, we start by extracting the power states exhibited in a current signal.

We start by filtering $i(t)$ using a lowpass filter, $h_1(t)$ to remove high frequency noise from the signal, $i_l(t)=i(t)*h_1(t)$. Then we compute the derivative of the filtered signal, $I(t)=i_l(t)'$. The derivative will be near zero for the pieces of $i(t)$ with a constant current level. We filter the derived signal $I(t)$ with another lowpass filter, $h_2(t)$, to remove more high frequency noise, $I_f(t)=I(t)*h_2(t)$. Finally, we compute a threshold of the signal using an indicator function $I_{>\lambda}(t)$. The indicator function is 1 if the absolute value of a signal is greater than $\lambda$. Figure~\ref{fig:block} shows the block diagram of the transformation.
The transformation leads us to finding the segments of the signal with constant current, those segments are the power states. For each segment $t=[t_a,t_b]$, we compute the average $i_1=\frac{1}{t_b-t_a}\int_{t_a}^{t_b}i_l(t)dt$. The average represents the current drawn during the power state. The duration of each state is computed as $\tau=t_b-t_a$.
\begin{figure}
\centering
\includegraphics[width=0.85\columnwidth]{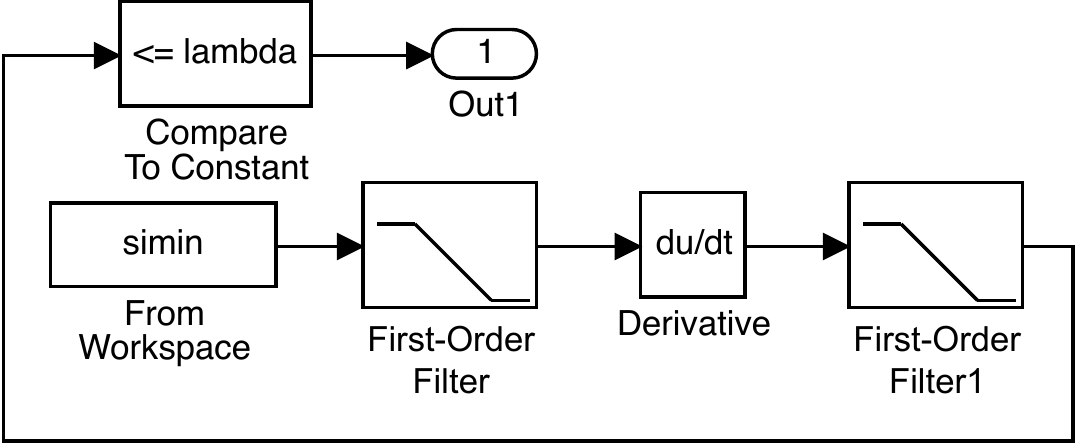}
\caption{Block diagram for power state extraction.}
\label{fig:block}
\end{figure}

\subsection{Power Models}
We model the operations that take place in \power-protocol, network and hashing operations, using Power Finite State Machines (PFSM). The model and timing of each power state are used by \power to validate that the \power-protocol was running untampered.

A Power Finite State Machine (PFSM), proposed by Pathak~\cite{Pathak:2011:FPM}, is a state machine where each state represents a power state $S_k$. Each power state has a constant amount of current drawn $i_k$. The duration of each state is not encoded in the PFSM. A PFSM has an initial idle state $S_0$ with power level $i(S_0)=i_{idle}$. When an operation starts, such as a network receive with a TCP socket, the PFSM moves deterministically to another power state $S_1$ with current level $i(S_1)=i_1$ such that the total current drawn is $i_{idle}+i_1$.

The \power-protocol starts by a network communication (network operation) between \power and the machine. Then the machine is supposed to load and run the IC-program (hash operation). Below is the description of the PFSM of the operations:
\begin{itemize}
\item A short network operation has a PFSM that moves from the idle state $S_0$ to $S_1$ with current level $i_1$. A long network operation alternates between the $S_0$ and $S_1$; the period is $T\ \mu s$. A short network operation is longer than $T\ \mu s$. Figure~\ref{fig:p:net} shows the current trace drawn during a network operation.
\item The hash operation has a a PFSM that moves from the idle state $S_0$ to $S_2$ with current level $i_2$. Then after the hash function is loaded, it starts running and the power state moves to $S_3$ with current level $t_3$. At the end of the operation, the PFSM returns to the idle power state $S_0$. The duration of state $S_3$ is equivalent to the time it takes for the operation to execute. Figure~\ref{fig:p:hash} shows the current trace drawn during a hash operation.
\end{itemize}
We merge the two state machines to follow the operation of \power-protocol. The overall operation state machine is shown in Figure~\ref{fig:psfm:c}.
The \power-protocol PFSM starts from state $S_0$, it moves to $S_1$ during the network operation when the untrusted machine receives the hash function (IC-program), address list, and the nonce. The PFSM then moves to the hash operation, state $S_2$ to load and $S_3$ to run. Finally, as the untrusted machine sends the result to \power the PFSM switches to a network operation, state $S_1$ in particular.

In the following, we explain the method for learning the normal PFSM of a machine.

\begin{figure}
\centering
\begin{subfigure}[b]{0.45\textwidth}
\centering
\caption{Memory read and hash current trace}
\includegraphics[width=\linewidth]{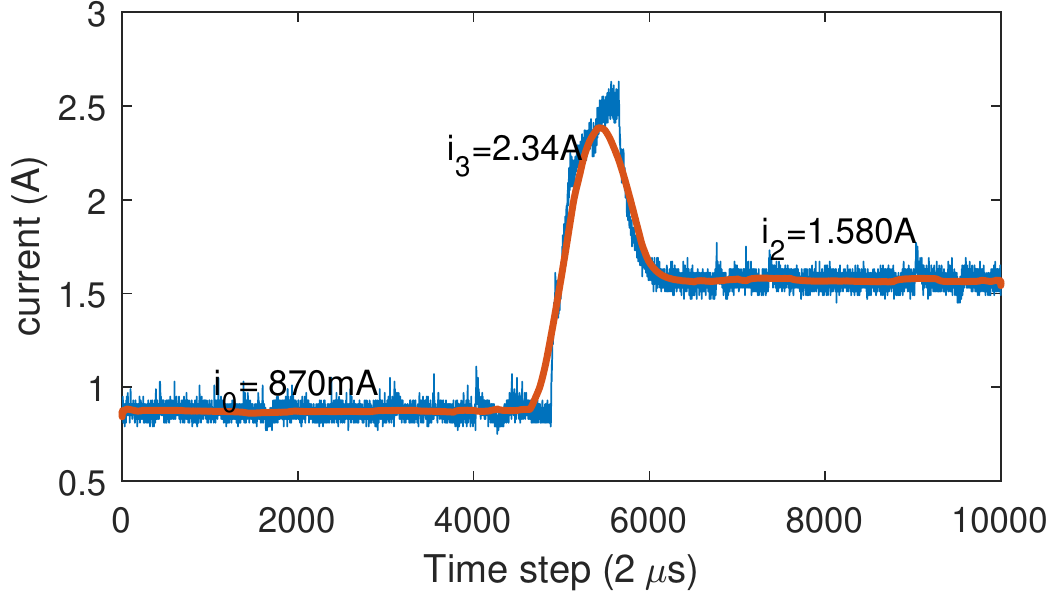}
\label{fig:p:hash}
\end{subfigure}
\begin{subfigure}[b]{0.45\textwidth}
\centering
\caption{Network operation current trace}
\includegraphics[width=\linewidth]{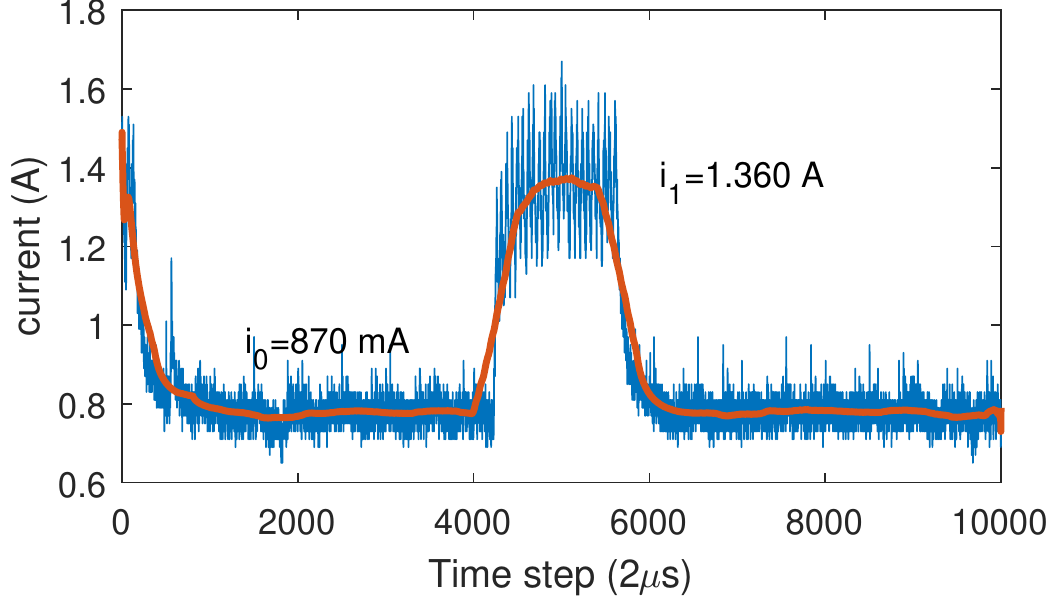}
\label{fig:p:net}
\end{subfigure}
\caption{Current drawn during network and memory read operations.}
\end{figure}
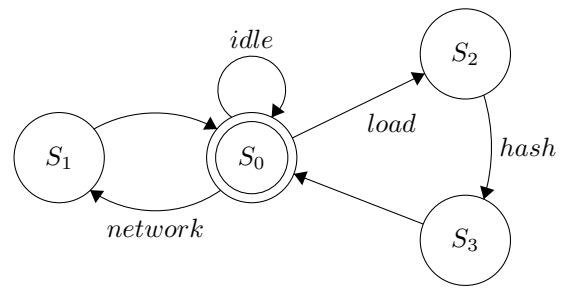
\begin{figure}
\begin{center}
\begin{tikzpicture}[scale=0.2]
\tikzstyle{every node}+=[inner sep=0pt]
\draw [black] (39.6,-24.6) circle (3);
\draw (39.6,-24.6) node {$S_0$};
\draw [black] (39.6,-24.6) circle (2.4);
\draw [black] (53.8,-17.7) circle (3);
\draw (53.8,-17.7) node {$S_2$};
\draw [black] (53.8,-30.1) circle (3);
\draw (53.8,-30.1) node {$S_3$};
\draw [black] (26.8,-24.6) circle (3);
\draw (26.8,-24.6) node {$S_1$};
\draw [black] (42.3,-23.29) -- (51.1,-19.01);
\fill [black] (51.1,-19.01) -- (50.16,-18.91) -- (50.6,-19.81);
\draw (48.9,-21.66) node [below] {$load$};
\draw [black] (55.022,-20.431) arc (16.90507:-16.90507:11.929);
\fill [black] (55.02,-27.37) -- (55.73,-26.75) -- (54.78,-26.46);
\draw (56.04,-23.9) node [right] {$hash$};
\draw [black] (51,-29.02) -- (42.4,-25.68);
\fill [black] (42.4,-25.68) -- (42.96,-26.44) -- (43.32,-25.51);
\draw [black] (38.277,-21.92) arc (234:-54:2.25);
\draw (39.6,-17.35) node [above] {$idle$};
\fill [black] (40.92,-21.92) -- (41.8,-21.57) -- (40.99,-20.98);
\draw [black] (37.547,-26.761) arc (-54.90108:-125.09892:7.56);
\fill [black] (28.85,-26.76) -- (29.22,-27.63) -- (29.79,-26.81);
\draw (33.2,-28.64) node [below] {$network$};
\draw [black] (29.1,-22.699) arc (119.30454:60.69546:8.376);
\fill [black] (37.3,-22.7) -- (36.85,-21.87) -- (36.36,-22.74);
\end{tikzpicture}
\end{center}
\caption{Power Finite State Machine (PFSM) of \power-protocol}
\label{fig:psfm:c}
\end{figure}
\subsection{Learning the models}
For each machine, we assume that we start the from an initial uncompromised state. We establish a power behavioral baseline, build the PFSM and a language for each operation, and learn an execution time model. We initiate the \power-protocol multiple times and store the current signal for every run. For each signal the power states are extracted using the method in section~\ref{sec:p:xtract} and the values of the current drawn are averaged. Moreover, we establish the idle power state by measuring power when no applications are running, that is $i(t)$ is constant.

In our test machine, an AMD Athlon 64 machine running Linux 4.1.13, the current drawn during the idle state is $870mA$, the current drawn during the load phase is $2.34A$, the current drawn during the hash phase is $1.580A$, and the current drawn during the network operation phase is $1.360A$. The idle state current depends on many factors including the services running in the operating system and the semiconductor manufacturing process. The manufacturing process determines static power consumption (subthreshold conduction and tunneling current) which is the current draw when the gates are not switching. Thus the current levels in the generated are unique to the machine and need to be learned for each machine. We decided not fold in semiconductor aging into the power model. Aging causes degradation of the transistor leading to failures; however the time scale where aging affects performance is in the order of 5 years.
Specifically, aging has no effect on dynamic power~\cite{aging:p1} but it does affect threshold voltage. The static power is proportional to the threshold voltage~\cite{zhang2003hotleakage}. Studies have shown that the threshold voltage varies within 1V during thermal accelerated aging~\cite{aging:nsa} which causes a $0.4\%$ increase in static power. We consider this increase insignificant to incorporate into the model especially that it requires years to happen.


In the following, we learn a timing model using the training data from the machine to be inspected and we propose the method of validating the execution of the \power-protocol using the learned PFSM and the timing model.
\subsubsection*{Retraining the models}
In order to retrain the model when needed. We opt for the following procedure: (1) backup the data in permanent storage, (2) wipe storage, (3) install a clean OS, (4) collect training data and learn the models, and (5) restore permanent storage. This process, given our assumption of no hardware attacks, ensures that the attacker cannot interfere with the training process, as the persistent storage is removed during the training phase.

\subsection{Learning Power State Timing} \label{sec:learning}
%
We use timing information in our system as part of the validation process. Specifically, we confirm that an adversary is not trying to deceive \power by extracting the timing information (duration) from each power state and compare it to the learned model, the details are in the next section.
By extracting the timing information using the power signal we control the accuracy of the measure as opposed to using network RTT in remote attestation schemes which are affected by the network conditions.
Moreover, we have confidence that the timing was not manipulated as it was extracted from an untampered source.
We learn the execution time model for the hash phase and the network phase.

The hash function (IC-Program) has a variable number of instructions to execute per cycle and a variable input set size. 
We consider the general structure of the IC-program (Section~\ref{sec:icprogram}) $f(\mathcal{L},\eta)$, where $N=|\mathcal{L}|$ is the size of the input set, and $\norm{f}_c$ is the number of instructions executed by hash function each iteration. All IC-programs have a a complexity $\mathcal{O}(c\cdot N)$ where $N$ is the input size and $c$ is the number of instruction per loop, and use the same type of instructions as any IC-program. We postulate that any IC-program of equal input size $N$ and $c$ number of instructions will have the same execution time. Thus to obtain the training data for learning the timing model, we generate IC-programs for different input size and instruction count and find the execution duration per program.

The experiments are repeated multiple times; the results are averaged to account for the indeterministic nature of program execution. 
We use multivariate linear regression to learn a model of the execution time of the IC-program. The model uses predictor variables $x=[N,\ \norm{f}_c]$ and a response variable $y=t$ (execution time).
For our test machine, Figure~\ref{fig:time:data} shows the data points for duration at power state $S_2$ during the hash phase of the \power-protocol. The surface drawn is the learned model, with implicit equation $y=1.3958 + 0.081 x(1) - 0.017 x(2) + 0.008 x(1)\times x(2)$ and mean error $\sigma=5.4542 \mu s$.
The mean error of the model is significant because it determines the leeway the adversary. If the error is high, then the attacker has a wide gap to employ evasion techniques. However, a minor error means that the attacker has a small gap for evasion. Figure~\ref{fig:rootkit:time} shows the impact of an attacker injecting instructions into the program. The plot shows the current signal measured during the hash phase. The blue signal is the normal behavior, and the orange signal is the tampered one. Both signals have the same power states. However the tampered signal stays longer in the second state.
\begin{figure}
\includegraphics[width=\columnwidth]{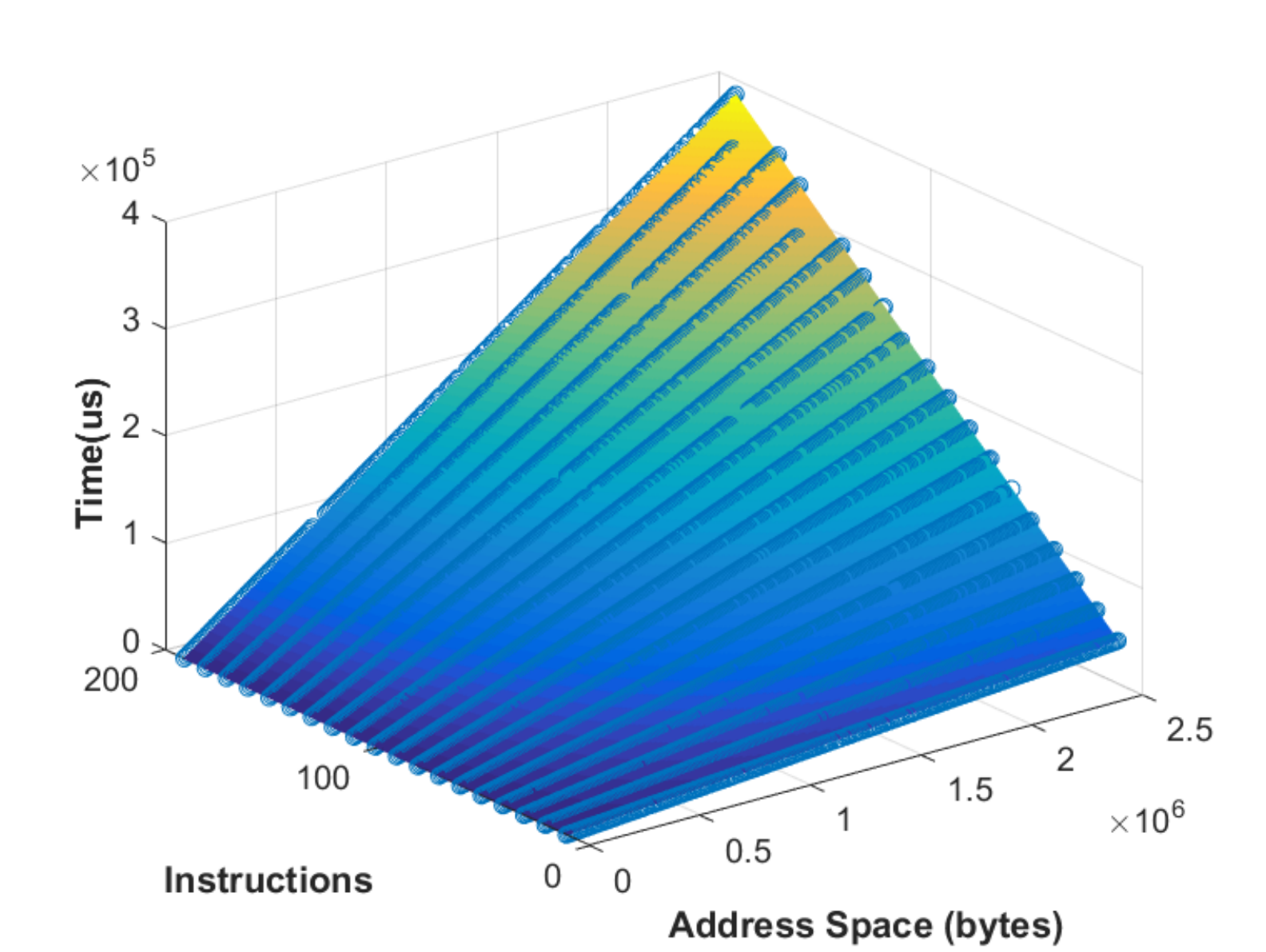}
\caption{Power State Timing model for hashing phase.}
\label{fig:time:data}
\end{figure}
\begin{figure}
\centering
\includegraphics[width=\columnwidth]{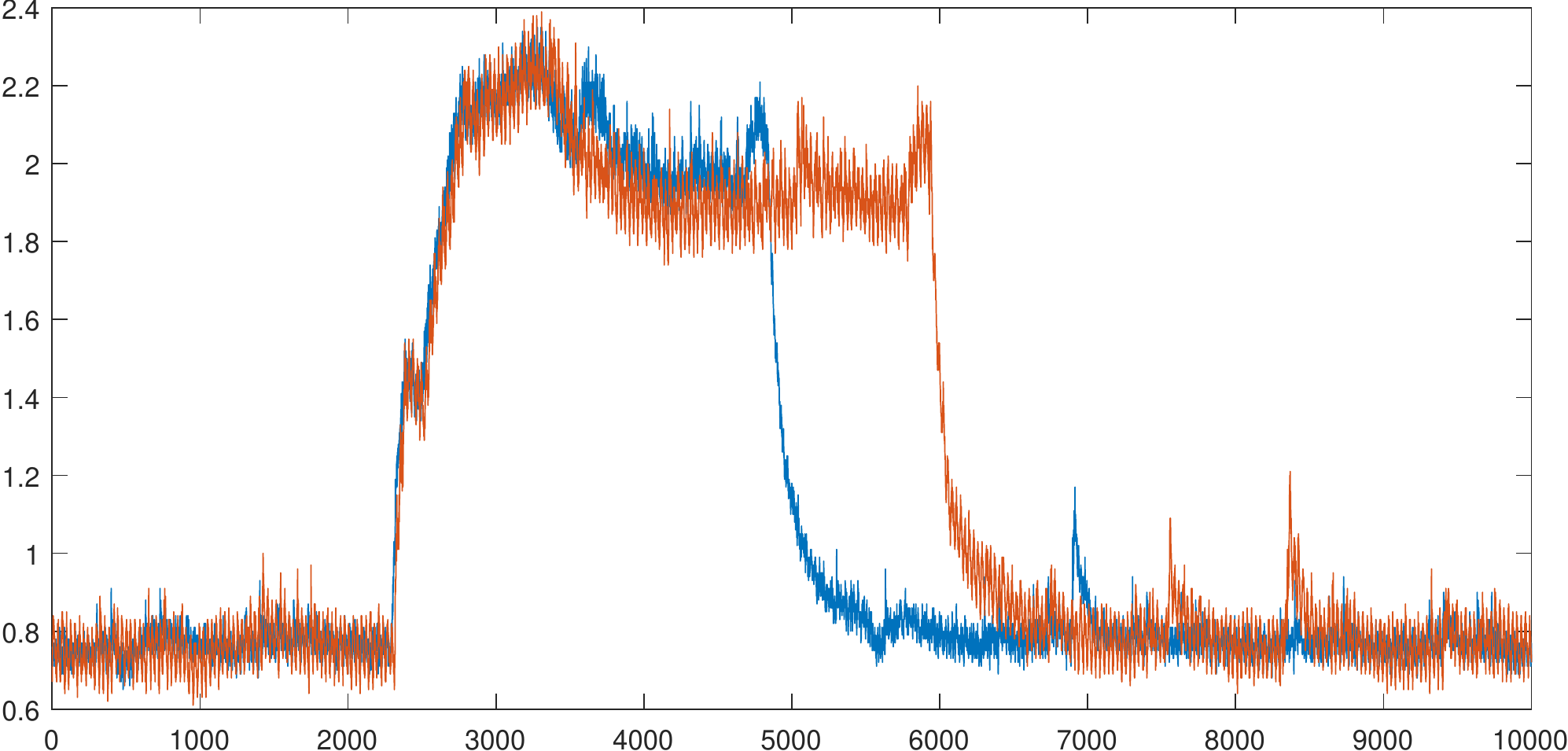}
\caption{Timing difference in current signal due to tampering.}
\label{fig:rootkit:time}
\end{figure}
%
%

During the network phase, the machine is either receiving data or sending the result. When the \power-protocol is initiated, the CPU performs an IO operation to transfer data from the network card. While when the machine returns the results, the CPU performs an IO operation to transfer data to the network card. We learn the timing model of the network phase by varying the number of bytes to be transferred and then measuring the time it takes to transfer those bytes. Using our test machine, $y_n=0.129\times x + 12.48$ is the linear model that predicts the timing for the network phase as a function of the number of bytes ($x$). The mean error of the model is $\sigma_n=1.902\mu s$. The model's constant component is the time it takes the OS to create the buffers and the linear component is the time to transfer the data over the buses using DMA. The linear coefficient models the bus speed.

\subsection{Measurement Validation}
\power measures the current drawn in real-time when it initiates the protocol and attempts to validate that the power trace is generated by the \power-protocol PFSM. The power states are extracted from the current signal resulting in a sequence of states $\mathcal{S}={S(0),S(1),S(2),\ldots, S(n)}$. 
The sequence of states is matched to the regular language which is generated by the \power-protocol PFSM learned by \power during the training phase (Figure~\ref{fig:psfm:c}).
\begin{displaymath}
L=(S_0,S_1)^+ (S_0, S_2, S_3, S_0) (S_0,S_1). 
\end{displaymath}
The first part of the language, $(S_0,S_1)^+$, is the protocol initiation phase. The second part, $(S_0, S_2, S_3, S_0)$, is the hashing phase. Finally, the last part, $(S_0,S_1)$, is the output phase. Note it is expected for the output phase to be a short network operation while the initiation phase to be a long network operation.

During the current trace validation, \power extracts the duration of each power state, and then the duration of each operation phase. We use the timing information for confirming that the duration of each phase is consistent with the protocol operation and hashing.
The expected duration of each phase is determined by the following:
\begin{itemize}
\item For the hash phase: \power computes $\delta = |\hat{y}-y_t|$, the difference between the expected execution time, $\hat{y}$, and the measured execution time, $y_t$. Malicious behavior is detected if $\delta \geq \gamma\max(\sigma_m + \sigma_s)$, where $\sigma_m$ is model error, $\sigma_s$ is the sampling error, and $\gamma$ is a tolerance factor.
\item For the network operation: \power computes $\delta_n=|\hat{y}_n-y_{n,t}|$, where $\hat{y}_n$ is the predicted time and $y_{n,t}$ is the measured time. Malicious behavior is detected if $\delta_n\geq \gamma \max(\sigma_n,\sigma_s)$, where $\sigma_n$ is the model error, and $\sigma_s$ is the sampling error, and $\gamma$ is a tolerance factor.
\item A sanity check is performed to ensure that the total duration of the phases is less than the RTT as measured by \power's clock.
\end{itemize}

The dynamic nature of the hash functions makes the use of timing information effective.
While an attacker can find a faster implementation of a static integrity checking program to evade time-based checks, we, on the other hand, use diversity to prevent the attacker from adapting and thus keeping the timing information a useful method for detecting manipulation.

\subsection{Parameter Search}
\label{sec:optimize}
The gradient of the model, $\nabla y=((0.008x(2)+0.081),(0.008x(1)-0.017))$, reveals that while keeping $x(1)$ constant a slight increase in the number of instructions executed, $x(2)$, leads to an increase in the execution time proportional to the input size, $x(1)$. A larger input size has a greater impact. If an attacker were to inject some instructions into the IC-program, the input size and the original number of instructions determines $\nabla y(x)$, the increase in execution time.

The parameters of the IC-program impact the effectiveness of the detection method. A small $\nabla y(x)$ will increase the false positive rates. While, a large $\nabla y(x)$ will increase the cost to initiate the \power-protocol.
Moreover, the sampling rate of the current measurement system in \power constraints the minimum $\nabla y(x)$ allowed.
The sampling rate is determined by the hardware used, a hardware with low sampling rate has a lower cost than that with a high sampling rate.
A high sampling rate requires more memory and bandwidth to acquire, process, and store the data. 
In order to find the optimal parameters for the IC-program, \power minimizes the total running time constrained by the hardware sampling rate, the cost to run the IC-program, and the coverage required. Table~\ref{tab:min} contains solutions for the optimization below using parameters and models from our test environment. 
\begin{equation*}
\label{eq:optimization}
\begin{aligned}
& \underset{N, \norm{f}_p}{\text{minimize}}
& & y(N,\norm{f}_p) \\
& \text{subject to}
& & y(N,\norm{f}_p+k)-y(N,\norm{f}_p) > \gamma\cdot\max(\sigma_m , \sigma_s) \\
&&& y_n(\norm{f}_p) > \gamma\cdot\max(\sigma_n , \sigma_s) \\
&&& \norm{f}_p < cost,\ N/N_{total} > coverage.
\end{aligned}
\end{equation*}

We compute the parameters of the IC-program for $k=4$ with a tolerance factor $\gamma=10$, $\text{cost}=300$, and $\text{coverage}=0.000001$. We varied the sampling rates for measuring current; the sampling rates reflect the investment made into \power's capabilities. Our computations show that the higher the sampling rate the smaller the IC-program has to be. So if the designer invests more in the hardware capabilities of \power, the attacker's leeway will be tighter and with low overhead to the machine.

\begin{table}
\caption{Minimum IC-Program parameters}
\label{tab:min}
\centering
\begin{tabular}{p{1.5cm} p{1.5cm} p{1.5cm} p{1.5cm}}
\toprule
Sampling Rate & Error Tolerance ($\mu s$) & Coverage (bytes) & Program Size \\
\midrule
 1 MHz      & 64.542 &   2,019      &  40 \\
 500 KHz    & 74.542 &   2,331    &  40 \\
 250 KHz    & 94.542 &   2,956     &  40 \\
 200 KHz    & 104.542 &  3,269    &  40 \\
 54 KHz     & 239.727 &  7,493    &  40 \\
 \bottomrule
\end{tabular}
\end{table}

\section{Attacker-Verifier Game}
\label{sec:game}


In this section, we study the interactions between the \power
strategy and an attacker trying to persist in a target machine.
On a high-level, we introduce a continuous time game 
to model the interactions between the attacker who is trying 
to hide and a verifier using the \power strategy to detect 
intruders. 
%
In this game, the verifier initiates the \power-protocol at random times with a pre-defined strategy. The attacker tries to anticipate the verifier's strategy and disables the malicious changes to the kernel in order to avoid detection.


The verifier's actions consist of deciding on the time instants at which she wants to 
initiate the \power-protocol, while the attacker's actions consist of choosing time instants
at which she wants to hide her activity in order to avoid detection. The attacker's goal
is to coincide her actions with the verifier's actions so that her malicious activity is 
hidden when the \power-protocol is taking place. 
It is in the verifier's interest, however, that the attacker would disable her malicious activities as much as possible.
The verifier's goal, on the other hand, is 
to select a strategy that will catch the attacker off-guard (i.e., when the malicious activity
is not hidden) and detect the attacker's presence. 
In what follows, we first prove that when the verifier chooses her action times independently and identically distributed, then the attacker's best strategy is to 
hide her activities periodically with a fixed period $T^*$. 
We then use simulation
to evaluate the interactions between the attacker and the verifier
for the scenario where the verifier plays according to exponentially distributed attestation times. 
We measure the probability of the attacker being 
detected, the fraction of verifier actions that coincide with the attacker's action, as well 
the fraction of time in which the attacker's malicious activity is hidden, as a function of 
the rates of play of both the verifier and the attacker (the rate $\lambda_0$ of the exponential distribution and the rate $\lambda_1 = \frac{1}{T^*}$ of the periodic distribution). We first 
formalize our game setting, present our theorem, and then present our model and simulation results. 

\subsection{Formalization as a Game}
We follow an approach similar to that presented in FlipIt~\cite{flipit}, a continuous time game
in which both players make stealthy moves (i.e., a player cannot obtain the state of the game unless she makes a move, and thus cannot observe her opponents' moves unless she makes a move of her own) in order to take control of a shared resource. Our formalization differs from FlipIt in that moves
are not instantaneous, they rather spread over an interval of time. Furthermore, the verifier's moves
in our game do not always yield a successful outcome; they often fail either due to the attacker
hiding her malicious activity or due to memory coverage considerations. In what follows we define 
the players' actions, their views of the game, their strategies, 
and the type of strategies we consider in our simulation. We refer by player $0$ to the verifier and by player $1$ to the attacker, and let $\mathcal{B} = \{ \top, \bot \}$ be the set of Boolean constants \texttt{true} 
and \texttt{false}.

\begin{definition}{Attacker action.} The attacker's action consists of hiding her malicious 
activity for a specified period of time. Such an action would restore all of the kernel address 
space location to their original state, thus avoid detection in case the verifier initiates
a \power-porotocol attestation process. Formally, 
let $c: \mathcal{R}^{+} \longrightarrow \mathcal{B}$ be the function defining the state of the attacker's malicious activity at any time $t > 0$, i.e.,
\[
c(t) = \left\{ 
 \begin{aligned}
  \top & \quad \mbox{iff attacker is active at time t} \\
  \bot & \quad \mbox{otherwise}
 \end{aligned}
\right.
\]
Let $\mathcal{C}$ be the state of all such state functions. 
We slightly abuse the notation to write $c([t_a, t_b])$ to refer to the state of the attacker's 
activity in the time interval $t_a \leq t \leq t_b$. 

An attacker's action is therefore defined as a function $a_1: \mathcal{C} \longrightarrow 
\mathcal{C}$ that changes the state of the attacker's activity for a period of time $\alpha_1$. 
Formally,
\[
a_1(c([t, t + \alpha_1]) = \left \{ 
 \begin{aligned}[lr]
  \bot & \quad \mbox{iff } c([t,t+\alpha_1]) = \top \\
  c([t, t + \alpha_1]) & \quad \mbox{otherwise}
 \end{aligned}
\right.
\]

\end{definition}

\begin{definition}{Verifier action.} The verifier's action consists of initiating 
the \power-protocol and attempting to attest the victim machine's kernel address space. 
An attestation fails when the attacker has modified a memory location that the verifier 
is attempting to attest. A successful attestation does not necessarily mean the absence of
malicious activities; the attacker might have changed memory locations not requested for 
attestation by the \power-protocol. 

Formally, the verifier's action is a function $a_0: \mathcal{R}^{+} \longrightarrow
 \mathcal{B}$ where $a_0(t)$ is the outcome of initiating an attestation process at time 
 $t$. $a_0(t) = \top$ if the attestation succeeds and $a_0 = \bot$ if the attestation fails and 
 the verifier detects the presence of the attacker. 
 Let $\alpha_0$ be the length needed to complete an attestation procedure,
and let $p_e$ to the probability that an attacker evades the verifier's 
attestation attempt. In other words, $p_e$ refers to the probability that the verifier's 
action succeeds even in the presence of an active attacker. Therefore for
 $t_v > 0$, we can write 
 \[
 a_0(t_v) = \left \{ 
 \begin{aligned}
  \top & \; \mbox{ if } \exists{t \in [t_v, t_v + \alpha_0)}. C(t) = \top & \mbox{w.p. } p_e \\
  \bot & \; \mbox{ if } \exists{t \in [t_v, t_v + \alpha_0)}. C(t) = \top & \mbox{w.p. } 1 - p_e \\
  \top & & \mbox{otherwise}
 \end{aligned}
 \right.
 \]
\end{definition}
The game ends if the attacker is detected, i.e, if $\exists{t > 0}$ such that $a_0(t) = \bot$. 

\begin{definition}{Feedback function.} Let $t_{i,k}$ be the time at which player $i$ makes her
$k$'th move. We refer by $\phi_i(t_{i,k})$ to the feedback that player $i$ receives when 
she makes an action at time $t_{i,k}$. The verifier can only observe the outcome of her own actions
while the attacker can observe the outcomes of  any action (i.e., attestation attempt) that the verifier has attempted between the time the attacker last move and the current move time. 
Therefore we can write
\[
\phi_0(t_{0,k}) = \{ a_0(t_{0,k}) \}
\]
and
\[
\begin{aligned}
&\phi_1(t_{1,k}) = \{ a_1(t_{1,k}) \} \; \cup \\ 
&\quad \quad \{ a_0(t_{0,j}) \; | \; t_{1,k-1} + \alpha_1 < t_{0,j} < t_{1,k} + \alpha_1 \}
\end{aligned}
\]
 
\end{definition}

\begin{definition}{Player View.} For each player $i \in \{ 0, 1 \}$, we define the player's view of the game at time $t$ as 
$
v_i(t) = \{ \left( t_{i,1}, \phi_i(t_{i,1}) \right), \ldots,
\left( t_{i,k}, \phi_i(t_{i,k}) \right) \}
$,
where $t_{i,k} \leq t$ is the time at which player $i$ made her last move before time $t$.
We denote by $\mathcal{V}$ to be the set of all possible player views. 
\end{definition}

\begin{definition}{Player Strategy.} A player's strategy defines the time instants 
at which she wants to make her move. Formally, let $v_i(t_k)$ be player $i$'s view 
at time $t_{i,k} = t_k$ when she made her $k$'th move, then the player's strategy 
is a function $S_i: \mathcal{V} \longrightarrow \mathcal{R}$ such that
$t_{i, k+1} = t_{i,k} + S(v_i(t_k))$.
\end{definition}

\begin{definition}{Renewal Strategy.} A strategy $S_i$ is a renewal strategy if the 
action inter-arrival times are independent and identically distributed. In other words, the 
action inter-arrival times form a renewal process~\cite{stoch}. 
\end{definition}


\begin{theorem} \label{th:game}
If the verifier is playing with a renewal strategy, 
then the attacker's best strategy is to play periodically with a fixed period $T^*$. 
\end{theorem}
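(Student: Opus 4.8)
The plan is to run the renewal--reward argument that underlies the analogous questions in \textsc{FlipIt}~\cite{flipit}, adapted to the facts that here moves span intervals (of lengths $\alpha_0,\alpha_1$) and a verifier move succeeds with probability $p_e$ in the presence of an active attacker. I take the attacker's objective to be her expected total active (malicious) time before the game ends. The first step is to reduce to \emph{non-adaptive} strategies: while the game runs the attacker has not been detected, so every verifier outcome she sees through $\phi_1$ equals $\top$, and against a renewal verifier — whose inter-arrival law she already knows — this feedback is uninformative; hence any strategy is equivalent in outcome distribution to a prescribed sequence of inter-hide gaps $(Z_k)_{k\ge1}$, and it suffices to optimise over the law of a single gap $Z$.

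Next I would cut the time line at the attacker's move epochs into \emph{cycles}: a cycle of length $z$ consists of a hide window of length $\alpha_1$, during which detection is impossible, followed by an exposed interval of length $z-\alpha_1$. Because the verifier plays a renewal strategy and the attacker acts blindly, at the start of each cycle the verifier's forward--recurrence time is in its equilibrium distribution and independent of the attacker's past, so the cycle's interaction with the verifier is governed only by $z$. Let $r(z)$ be the expected active time accrued in a cycle of length $z$ before detection or the cycle's end, and $s(z)=\mathbb{E}\big[p_e^{\,N(z-\alpha_1+\alpha_0)}\big]$ the probability of surviving the cycle, where $N(\ell)$ counts verifier attestations whose window meets an exposed interval of length $\ell$. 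Conditioning on the first cycle gives the renewal equation $R=\mathbb{E}[r(Z)]+\mathbb{E}[s(Z)]\,R$, so the attacker maximises
\[
 R \;=\; \frac{\mathbb{E}[r(Z)]}{1-\mathbb{E}[s(Z)]}
\]
over all laws of $Z$ supported on $[\alpha_1,\infty)$.

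This is a linear--fractional program in the law of $Z$, hence its optimum is attained at a law supported on at most two points; writing $\lambda^{*}$ for the optimal value and $\varphi(z)=r(z)-\lambda^{*}\big(1-s(z)\big)=r(z)+\lambda^{*}s(z)-\lambda^{*}$, optimality is the supporting--line condition $\varphi(z)\le0$ for all $z$, with equality on the support of the optimal $Z$. If $\varphi$ is strictly concave it can vanish at only one point, so the optimal $Z$ is deterministic, $Z\equiv T^{*}$ with $T^{*}$ the value of $z$ maximising $r(z)/(1-s(z))$ — i.e.\ the attacker hides periodically with period $T^{*}$, and $T^{*}$ is exactly the point that trades a longer active stretch per cycle against the larger per-cycle detection risk such a stretch incurs, which is the stealthiness/risk trade-off the paper reports.

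The hard part will be the regularity input: proving that $\varphi=r+\lambda^{*}s-\lambda^{*}$ is strictly concave. One needs (i) concavity of $r$, which should follow since an extra unit of exposure contributes active time only on the event of not yet having been detected, so its marginal contribution is decreasing; and (ii) control of the curvature of $s(z)=\mathbb{E}[p_e^{N(z-\alpha_1+\alpha_0)}]$ via monotonicity of the equilibrium renewal counting function, used together with the exact value of $\lambda^{*}$ (note that for an exponential verifier one gets $\varphi\equiv0$, so every strategy including the periodic one is optimal — the statement is tight but degenerate there, and away from the memoryless case the concavity must genuinely be extracted from the renewal structure). A secondary, more routine issue is justifying the cycle decoupling used to obtain the renewal equation: the verifier's process does not literally regenerate at the attacker's epochs, so one works with its stationary version and bounds the lower-order correction terms.
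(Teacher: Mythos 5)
Your route is genuinely different from the paper's, and considerably more ambitious. The paper's entire proof is two sentences: because the verifier's inter-arrival times are i.i.d., the optimal waiting time $S_1(v_1(t_k))$ computed at one attacker move epoch is claimed to be optimal at every other epoch, hence the attacker should always wait the same amount $T^*$; the analytical form of $T^*$ is explicitly deferred to future work. You instead fix an explicit objective, reduce to non-adaptive strategies by observing that conditional on the game still running the feedback $\phi_1$ is uniformly $\top$ and hence uninformative, cut time into cycles at the attacker's epochs, derive the renewal-reward ratio $R=\mathbb{E}[r(Z)]/(1-\mathbb{E}[s(Z)])$, and argue via the linear-fractional structure that the optimal law of $Z$ is a point mass. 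That is the FlipIt-style machinery properly adapted to interval-valued moves and the evasion probability $p_e$, and it is what a real proof of this theorem would have to look like; what it buys over the paper's version is that the trade-off determining $T^*$ becomes an explicit optimization rather than an unexamined assertion.

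That said, the two issues you flag as open are exactly where the content lies, and the paper does not resolve them either. First, the cycle decoupling: a renewal process does not regenerate at the attacker's move epochs, so the verifier's forward-recurrence time at the start of a cycle is not independent of the past unless the verifier is memoryless or one passes to the stationary version and bounds the correction; the paper's one-line claim that ``optimal at $t_k$ implies optimal at $t_k'$'' silently makes the same stationarity assumption. Second, the strict concavity of $\varphi$ needed to exclude two-point optimal laws is unproven, and as you note it degenerates to $\varphi\equiv 0$ in the exponential case, where periodic play is only one among many optimal strategies --- so the theorem's ``best strategy'' is at most weakly true in the one case the paper actually simulates. A further caveat: the paper never defines the attacker's payoff, so your choice of objective (expected active time before detection) is an added hypothesis; it is consistent with the simulation metrics, but it should be stated as such. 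Your plan is sound but not yet a proof; the paper's is an assertion dressed as one.
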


\begin{proof}
Since the verifier's action inter-arrival times are i.i.d., if at any time $t_k$ the attacker 
computes an optimal action play time $S_1(v_1(t_k))$, then $S_1(v_1(t_k))$ would also be 
optimal any other time instant $t_k'$. Therefore the attacker's best strategy to play 
periodically with period $T^* = S_1(v_1(t_k))$. Obtaining the analytical expression of $T^*$ is 
beyond the scope of this paper and considered for future work. 
\end{proof}

Theorem~\ref{th:game} states that the attacker's best response strategy to an attacker playing with a renewal strategy is to play periodically. The theorem further illustrates an important advantage that the attacker enjoys over the verifier, that of observability. Since the verifier does not know if the attacker is present or not, the events of a successful verification attempt and the attacker hiding her activity are indistinguishable. In other words, the verifier cannot observe the attacker's actions, and thus must choose her strategy before playing the game. As for the attacker, the verifier's actions are completely observable, and thus she can use that information to further improve her strategy. 


\subsection{Simulation and Results}
We implement a model of our game using {\em Stochastic Activity Networks} (SAN)~\cite{san}
in the M\"{o}bius modeling and simulation tool~\cite{mobius}. In accordance with theorem~\ref{th:game}, 
we assume that the verifier is playing with an exponential strategy with rate $\lambda_0$, while the attacker plays with periodic strategy with rate $\lambda_1 = \frac{1}{T_1}$, where $T_1$ is the attacker's period.
We vary the players' rates $\lambda_0$ and $\lambda_1$, and
evaluate the performance of 
their strategies with respect to three metrics: (1) the probability of detection, 
(2) the fraction of time the attacker is inactive, and (3) the hit ratio. We compute the {\em probability 
of detection} as the average fraction of simulation runs in which the game has ended. We define the attacker's 
inactivity indicator function as 
\[
I(t) = \left\{ 
\begin{aligned}
1 & \quad \mbox{ iff } C(t) = \bot \\ 
0 & \quad \mbox{ otherwise.} 
\end{aligned}
\right.
\]
For a time period $T$, we can then write the {\em fraction of time the attacker is inactive} as 
$\frac{1}{T} \int_{0}^{T} I(t)dt$.
We finally define the {\em hit ratio} as the fraction of verifier actions that coincide with 
attacker actions. In other words, the hit ratio refers to the fraction of attestation attempts
that succeed because the attacker has turned off her malicious activity.

For our simulations, we assume that the verifier is using a sampling rate of $500$KHz. 
From Table~\ref{tab:min}, we know that the coverage of the verifier's generated programs 
is $2331$ bytes. For a machine running the Linux 4.1.13 kernel with an average kernel memory size of $200$ MB, we can compute the probability of evasion as $p_e = 1 - \frac{2331}{200 * 1024 * 1024} = 0.99998$. Also, using our learned model in Section~\ref{sec:learning}, we can compute the time needed for attestation $\alpha_0 = 903 \mu s$. 

We vary the attacker's period ($T_1$) from $30$ seconds to $5$ minutes in steps of 10 seconds, and the verifier actions'
average inter-arrival times ($T_0$) from 1 minutes to 3 minutes in steps of 15 seconds. Additionally, 
we assume that the attacker chooses to hide her activity for half of her period, i.e. $\alpha_1 = \frac{T_1}{2} = \frac{1}{2\lambda_1}$. We run our simulation for 
10 days and report average results for all of our metrics. 

Figures~\ref{fig:prob_det} and~\ref{fig:time_off} show the probability of detection and 
the fraction of time the attacker is inactive, as a function of the attacker's play rate $\lambda_1$
and the verifier's play rate $\lambda_0$, respectively. 
For a fixed attacker play rate, the probability of detection increases as the verifier is 
increasing her play rate. Intuitively, the verifier is performing more attestation procedures,
and thus the probability that an attestation takes place while the attacker is active increases, so 
it is more likely that the verifier will be able to detect the attacker's presence. 

For a fixed verifier strategy, decreasing the attacker's play rate would yield a reduction
in the probability of detection. As the attacker is playing slower, her period increases, 
and thus she would have to hide her malicious activity for longer periods of time 
(if $\lambda_1 > \lambda_1'$ then $\frac{1}{2\lambda_1} < \frac{1}{2\lambda_1'}$). Therefore it is more
likely that the verifier's attestation attempt will coincide with the time she has her malicious 
activity hidden, thus reducing her probability of detection. This is shown in Figure~\ref{fig:time_off}
where for a fixed verifier play rate, the fraction of time where the attacker turns off her malicious 
activity decreases as her play rate increases. This is further confirmed by Figure~\ref{fig:hit_ratio}
where the hit ratio is at its highest when the attacker is playing slow, and decreases as the attacker
increases her play rate. 
This means that when the attacker is playing slowly, 
she is hiding more often and thus avoiding the verifier's attestations, therefore reducing the probability 
of being detected. This however comes at the expense of her activity time, the slower she plays, the 
longer she has to turn off her malicious activity, reaching a fraction of 70\% of inactivity when the 
verifier is playing at her slowest rate. 

Another interesting result is revealed by the peaks of the surface in Figures~\ref{fig:time_off}
and~\ref{fig:hit_ratio}. The fraction of the time the attacker is inactive reaches its peak 
when the attacker is playing slowly while the verifier is acting at her fastest rate. This is due
to the fact that a high verifier rate induces a high hit ratio, thus during one period of inactivity, 
the attacker might have to go through multiple attestation procedures, and might even have to extend her 
inactivity period if an attestation procedure is started right before the end of that period. 

Furthermore, when the verifier chooses a slow rate of play, the rate of increase in the probability 
of detection as a function of the attacker's play rate is slow. Therefore it is best for the attacker, in this case, to play fast, thus reducing her fraction of inactivity time while maintaining a relatively 
acceptable probability of detection. Therefore a slow rate of play for the verifier puts her at a 
disadvantage where that attacker can play fast, increase her activity time and still avoid detection. 
However, as the defender starts increasing her rate of play, the attacker faces a trade-off between 
the probability of detection and the fraction of inactivity time. Achieving high levels of activity 
means risking higher detection probability while keeping the detection probability low would require
the attacker to remain inactive for longer periods of time. The decision on such a trade-off 
depends on the attacker's level of stealthiness. An attacker who would want to remain stealthy, as in the 
case of an APT, would be more inclined to turn off their malicious activities more often than an 
attacker who's goal is to inflict the maximum amount of damage in the shortest period of time. 

In summary, our simulation results show that the usage of the \power-protocol for checking the integrity
of the kernel space would force a malicious attacker to balance a trade-off between risk of detection 
and periods of activity. An attacker that wishes to remain active as much as possible would risk a higher probability of detection, while an attacker that seeks to remain stealthy would have to 
incur periods of inactivity that can be as high as 70\% of her life time. 

\begin{figure}
 \centering
 \includegraphics[width=\columnwidth]{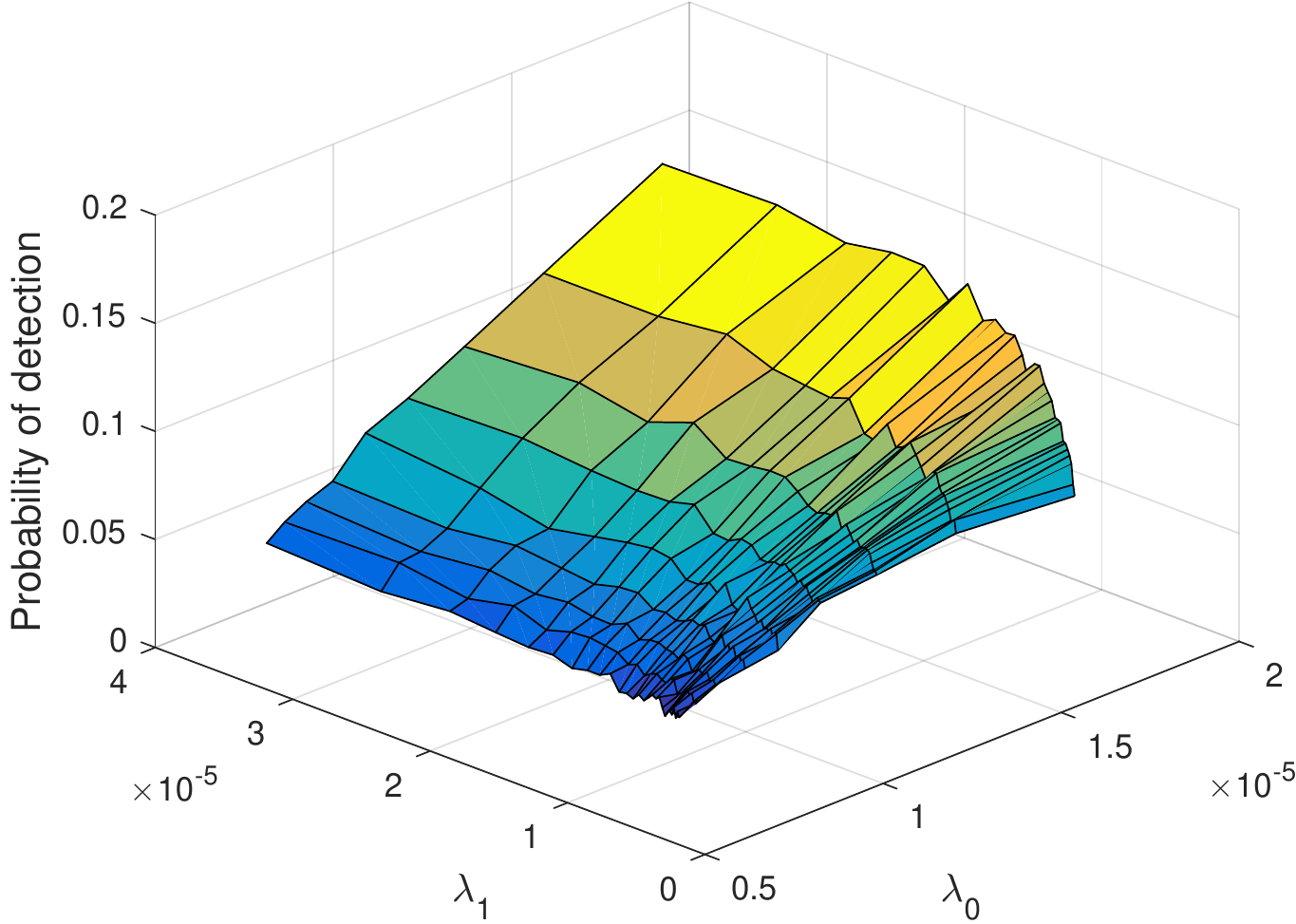}
 \caption{The average probability of the attacker's detection as a function of the attacker's and the verifier's play rates.}
 \label{fig:prob_det}
\end{figure}

\begin{figure}
 \centering
 \includegraphics[width=\columnwidth]{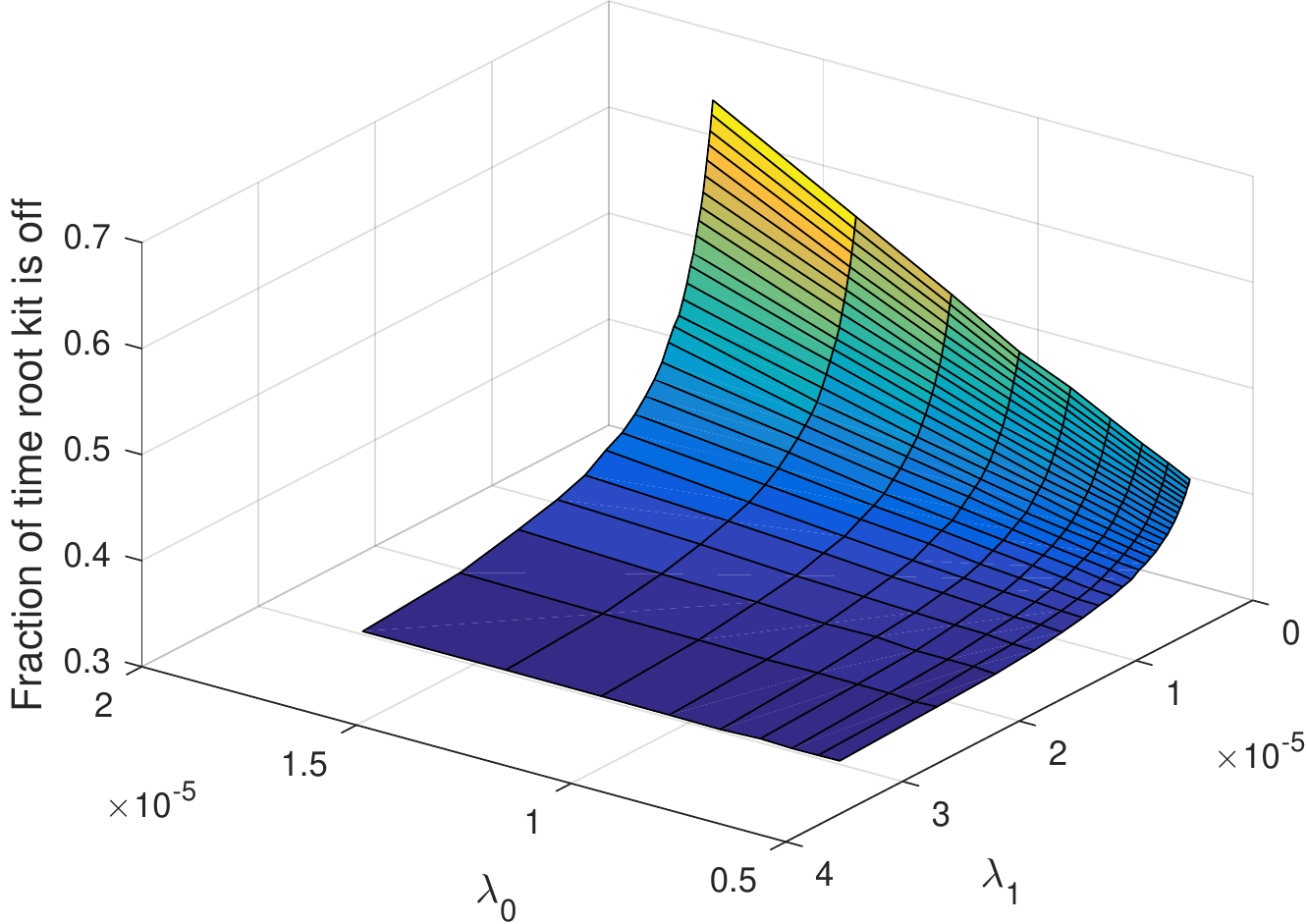}
 \caption{The average fraction of the time the attacker's malicious activity is hidden as a function of the attacker's and the verifier's play rates.}
 \label{fig:time_off}
\end{figure}

\begin{figure}
 \centering
 \includegraphics[width=\columnwidth]{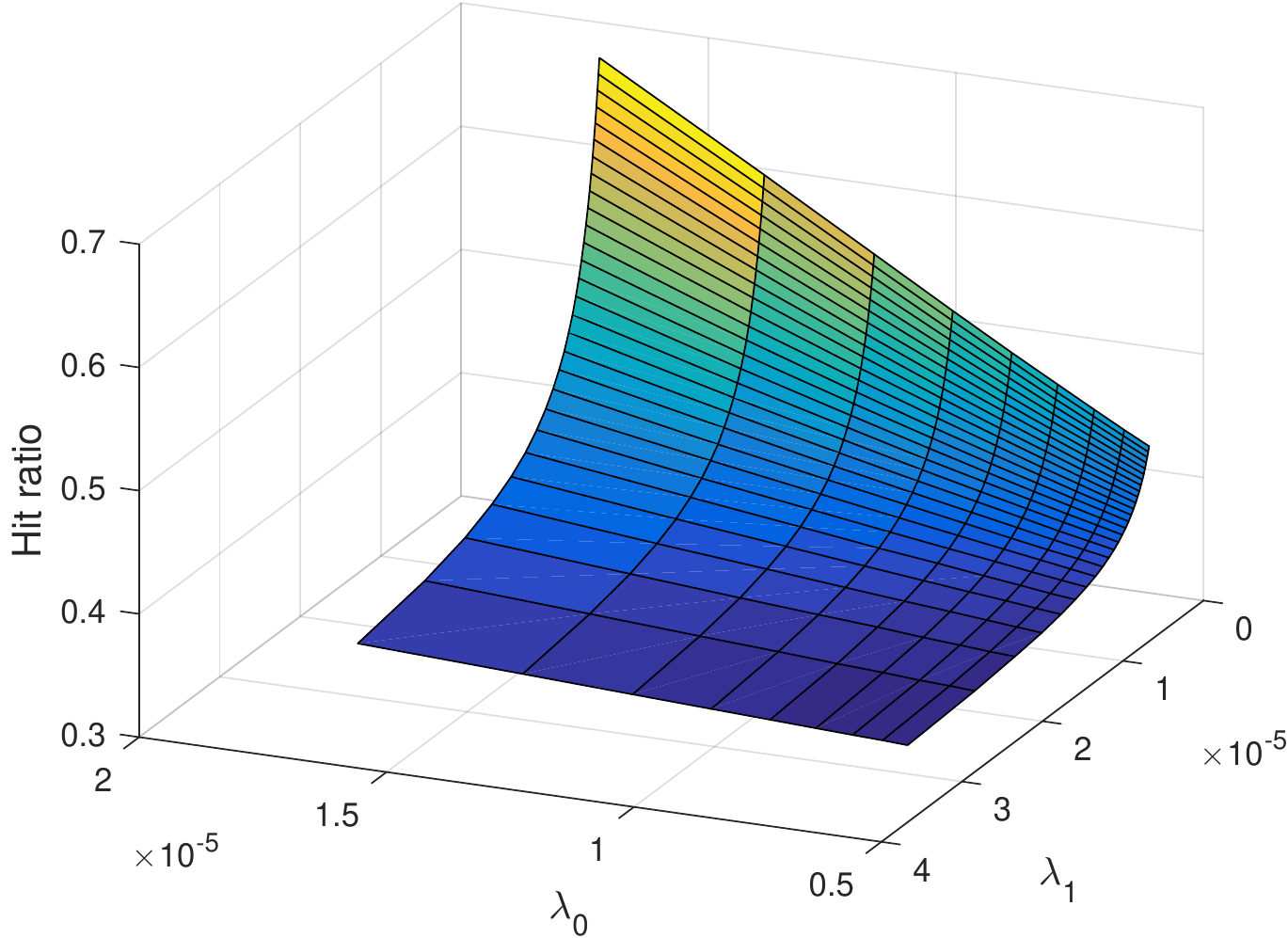}    
 \caption{The average ratio of attestation tasks that were evaded by the attacker's actions as a function of the attacker's and the verifier's play rates.}
 \label{fig:hit_ratio}
\end{figure}

\section{Evaluation}
\label{sec:eval}
In this section, we evaluate the performance of \power in generating the IC-Program and the size of space of IC-Programs.
\subsection{IC-Program Generation}
\label{sec:eval:perf:lfsr}
In the program generation algorithm, the hard problem is the generation of random irreducible polynomials of degree $d$ in $GF(2)$. We implement the generation algorithm using NTL~\cite{ntl}, A Library for doing Number Theory, on a Raspberry PI 2 v1.1. The results in Figure~\ref{fig:lfsrgen} show the average time it takes to generate the LFSRs for each degree $d$ using our implementation compared to the worst case complexity.
Our implementation performs orders of magnitude better than the worst case runtime.
Practically speaking it takes around one second to generate an irreducible polynomial of degree 128.  The performance can be significantly improved by either optimizing the algorithm, parallelizing the generation algorithm, or precomputing and then caching the generated polynomials. The rate of initiation of the \power-protocol should be less than that of the rate of IC-program generation for the system to be stable.
In our simulation we explored the effect of different initiation rates on the defender utility.
\begin{figure}
\includegraphics[width=\columnwidth]{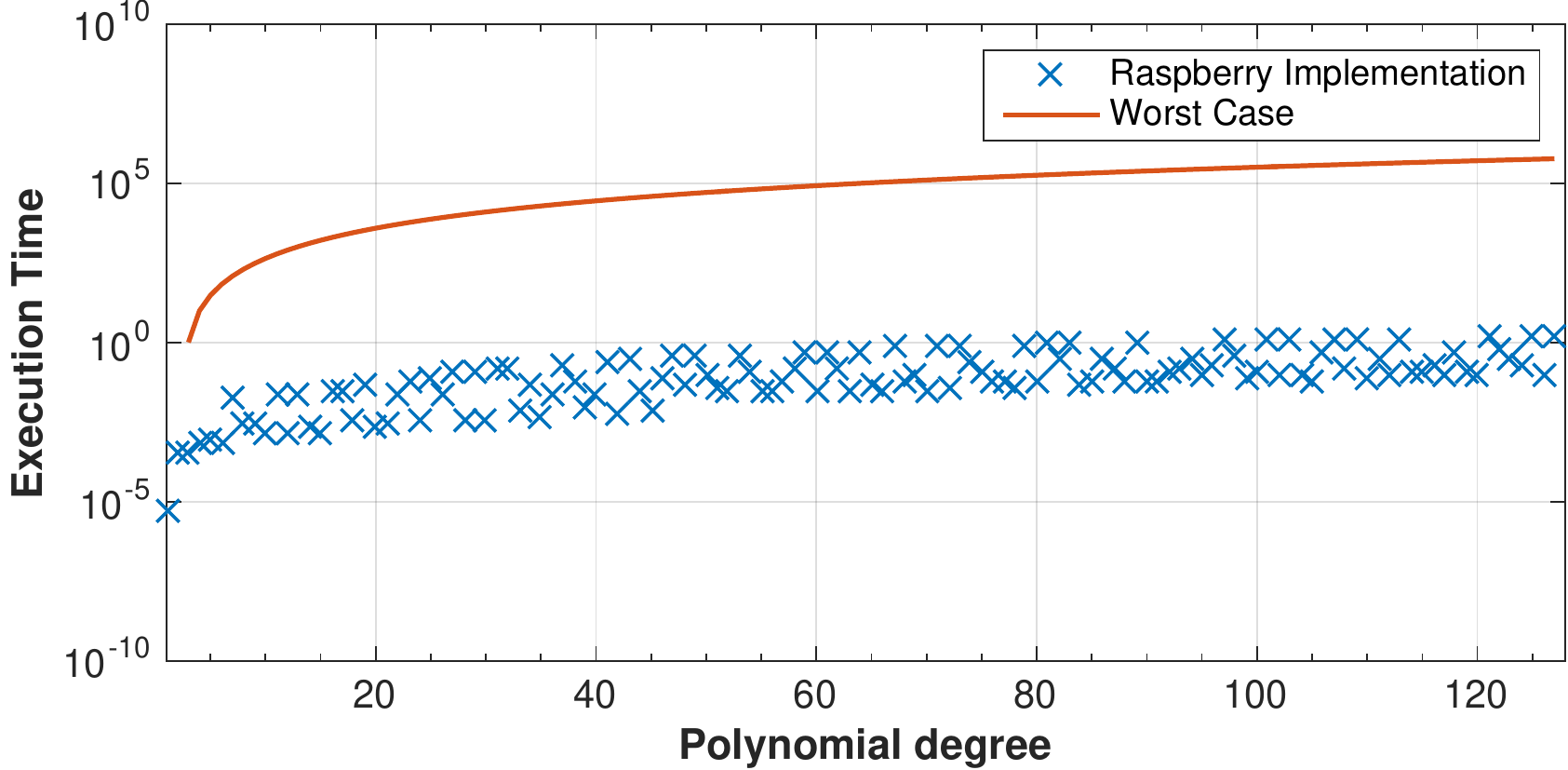}
\caption{Average time in seconds for generating a random irreducible polynomial for degree $d$}
\label{fig:lfsrgen}
\end{figure}
\subsection{Maximum IC-Programs}
We want to investigate the maximum number of IC-programs that can be generated by \power. The goal is to have a large space so that the generated programs are not reused.
An IC-program is generated by chaining randomly generated LFSRs of degree $d$ using a randomly generated binary tree of depth $n$.
The maximum number of IC-programs that can be generated is the product of the maximum number of binary trees multiplied by the maximum number of irreducible polynomials.
Let the maximum number of binary trees with depth $n$ be $t_n$ (equations below). The maximum number of nodes for a binary tree of depth $n$ is $2^n$ and thus the total number of tree is the sum of the Catalan number $C_m$, which the number of binary tree with $m$ nodes, over the total number of possible nodes. Let $M_d$ be the maximum number of irreducible polynomials of degree $d$ in $GF(2)$, $M_d$ is called the necklace polynomial.
\begin{equation}
t_n = \sum_{i=0}^{2^n}\left(\prod_{k=2}^{i}\frac{i+k}{k}\right), M_d(2)=\frac{1}{d}\sum_{k|d}\mu(k)2^{d} \notag
\end{equation}

Finally, the total number of IC-programs that can be generated is $D_{d,n}=M_d(2)\times t_n$. Figure~\ref{fig:total:ic} shows the total number of programs for an increasing degree of polynomials and depth of tree. The maximum number of programs reaches $1.9721\times 10^{26}$ for $n=40$ and $d=5$ guarantees that no program ever gets reused in the lifetime of the device, in fact if a new program is generated every one second the space would be depleted in $6.246\times 10^{18}$ years.
\begin{figure}
\includegraphics[width=\columnwidth]{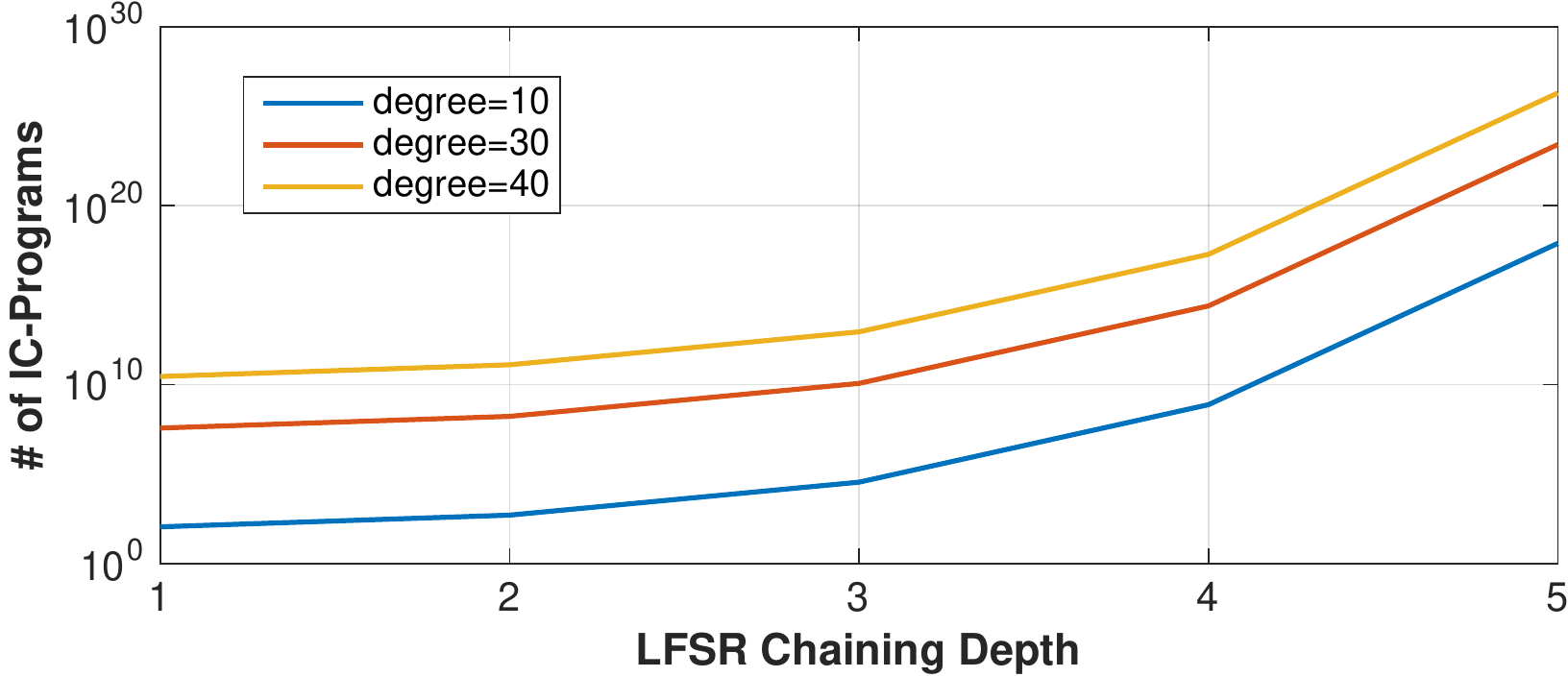}
\caption{Maximum Number of IC-Programs a varying tree depth and polynomial degree.}
\label{fig:total:ic}
\end{figure}

\subsection{Performance Impact}
When the \power-protocol is initiated and the IC-program starts execution, execution of all other tasks is paused. This is needed in order to ensure that no other tasks interfere with the current measured from the CPU.
In terms of graphical responsiveness, the pixel response time should not exceed $4ms$~\cite{Miller:1968:response}. The \power-procotol is initiated, on average, once every minute for $0.9ms$. Thus the graphical degradation will not be noticeable by a user. Moreover, we measured the performance degradation to be a factor of $0.0018$.

\section{Discussion}
In this section, we discuss some security details related to the implementation of this system. Specifically, we discuss the attack surface of \power and the security concerns with the IC-Program. Moreover, we consider the practicality of our solution, and it's important despite the existence of TPMs.

\subsection{Implementation Details}
Each \power device has a client on the untrusted machine. The client is a low-level module that communicates with \power.
The client can be implemented for placement in the kernel or the hypervisor.
The communication channel between \power and the client can be over any medium such as Ethernet, USB, or serial. All those channels are feasible because of the proximity between \power and the untrusted machine.
The use of serial or USB communication is advantageous because it limits the attacker to physical attacks, making man-in-the-middle and collusion attacks harder. If the attacker has physical access to the machine, then she could tamper with \power.

The client receives the IC-Program as machine code over the communication channel. \power signs the code, their keys are exchanged during the initialization phase of the system.
The signed program allows the machine to attest that \power is the generator. We propose using a stream cipher as it has better performance than public-private key ciphers or block ciphers.

  As for the \power's hardware, the resource requirements are minimal. We implemented a prototype using Raspberry PI 2. The prototype uses an ADC to convert the current measurements from the current loop to a digital signal. The ADC uses a sampling rate of 500KHz; most low-cost hardware can handle this sampling rate. Moreover, power state extraction is only performed when the \power-Protocol is initiated, the operation does not need to be real-time.



\subsection{\power's Attack Surface}
In case the untrusted machine gets compromised, an attacker might try to compromise \power to disable its functionality. The attack surface of power is limited to one communication channel that only uses the \power-Protocol. During the protocol \power receives the output of the IC-Program only. The current measurements are out of the control of the attacker.
The language of the protocol is context-free and thus can be verified using Language-theoretic security approaches~\cite{langsec:input:14}.
By verifying the parser, we have an assurance that even if the attacker compromises the machine, it cannot spread to \power.


\subsection{Comparison to TPM}
\power does not rely on specialized hardware within the untrusted machine such as TPM  or Intel's AMT. However, \power and trusted modules are orthogonal systems; whereas TPMs provide a method for secure boot, dynamic integrity checking is still costly and harder to enforce.
\power provides an external security solution that can be tied to a security management across a wide network. In fact, \power can use Intel's AMT as a communication channel.
Finally, our work demonstrates the need for measurements that do not pass through or origin in the untrusted machine. Such measurements reduce the risk of attacker tampering and mimicry.

\section{Security Analysis}
\label{sec:security}
\power uses current measurements, timing information, and diversity of the IC-program to protect against attackers subverting integrity checking. In this section we list how \power addresses the attacks in section~\ref{sec:threat}. In all cases, the power measurements ensure that detection is certain if the attacker was to perform parallel tasks, as she will be drawing more current than expected.



\textbf{Proxy Attack}:
If the attacker attempts to forward the IC-program to a remote machine to compute and return the result via the same network link. \power can detect this attack by examining its effects on the current trace and timing of the network phase. Using the current trace, \power will observe that network operations took longer than expected as more bytes will be transferred between the CPU and the network card. The size of the IC-program, which was picked by the optimization in Section~\ref{sec:optimize}, ensures that our hardware will pickup the re-transmission. For our test machine, Table~\ref{tab:min} shows that 40 instructions per iteration is enough. Any physical attack that such as tapping the network line, or firmware changes to the NIC are not within our purview.

\textbf{Data Pointer redirection attack}:
The attacker stores an unmodified copy of the data in another portion of memory. When an address is to be checked, the attacker changes the address to be checked with the unmodified portion. The IC-program uses the address and the memory content when computing the hash function. In order to compute a valid hash, the attacker has to change the address to the location of the copy while retaining the original address. Conservatively, the attacker has to add at least four instructions ($k=4$) per loop in order to achieve that goal. Table~\ref{tab:min} shows the parameters of the IC-program to ensure detection given our measurement error.

\textbf{Static Analysis}: Analyzing a flattened control flow is NP-Hard~\cite{Wang:2001}. Thus it will not be possible for the attacker to analyze the program without significant computations.

\textbf{Active Analysis}: Active reverse engineering is used to learn the usage patterns of the IC-program. \power changes the IC-Program each time the \power-protocol is initiated; the diversity renders the information learned by the attacker from the previous run obsolete. Moreover, the probability that a program will ever get repeated is $\sfrac{1}{10^{20}}$.

\textbf{Attacker Hiding}: In case the attacker attempts to hide, she has to predict when the \power-protcol is initiated. \power's random initiation mechanisms ensures that the attacker does not predict those instances. Our game theoretic analysis shows that using an exponential initiation strategy, the attacker's best strategy is to either always hide if the verifier is aggressive or attempt to hide without avail and get detected eventually.

\textbf{Forced Retraining}: In case the attacker forces \power to retrain by simulating a hardware fault that requires a CPU change, in order to lead \power to a compromised model. Then \power's process is to wipe the permanent storage, retrain using a clean OS, and then restore data. Since, we assume that the attacker does not modify the hardware state, then by removing permanent storage, she cannot impact the retraining process.

\section{Related Work}
\label{sec:related}

\subsubsection{Timing Attestation}
Seshadri et al. propose Pioneer~\cite{pioneer:2005} extended by  Kovah et. al.~\cite{pioneer:2012} a timing-based remote attestation system for legacy system (without TPM). The timing is computed using the network round trip time. The work assumes that the machine can be restricted to execution in one thread. The issue with the work is that the round trip time is affected by the network conditions which the authors do not explore, a heavily congested network will lead to a high variation on the RRT causing a high rate of false positives. Moreover, the restriction of execution in one thread can be evaded by a lower level attacker. In later work the authors discuss the issues of Time Of Check, Time Of Use attacks, we talk the problem in our work. Later work adapted timing attestation to embedded devices~\cite{min:atts:2014}.

Hern\'{a}ndez et. al.~\cite{phase:2015} implement a monitor integrity checking system by estimating the time it takes for a software to run. The timing information is sent from the machine to a remote server that uses a phase change detection algorithms to detect malicious changes. The issue of this work is that the timing information is sent by the untrusted machine and thus the information can be easily manipulated. Armknecht et. al.~\cite{framework:atts:2013} propose a generalized framework for remote attestation in embedded systems. The authors use timing as a method to limit the ability of an attacker to evade detection. The framework formalizes the goals of the attacker and defender. The authors provide a generic attestation scheme and prove sufficient conditions for provable secure attestation schemes.

\subsubsection{Power Malware Detection}
Several researchers use power usage to detect malware. In WattsUPDoc Clark~\cite{clark:whattsUpDoc} collect power usage data by medical embedded devices and extract features for anomaly detection. The authors exploit the regularity of the operation of an embedded device to detect irregularities. The authors however do not investigate mimicry attacks. Kim et. al.~\cite{Kim:2008} use battery consumption as a method to detect energy greedy malware. The power readings are sent from the untrusted device to a remote server to compare against a trusted baseline. The problem of this work is that the power readings can be manipulated by the attacker as the data is sent through the untrusted software. PowerProf~\cite{powerprof} is another in-device unsupervised malware detection that uses power profiles. The power information is similarly passed through the untrusted stack and is thus susceptible to attacker evasion through tampering. 


\subsubsection{Hardware Attestation}
Secure Boot~\cite{secboot} verifies the integrity of the system, with the root of trust a bootloader. Later on Trusted Platform Modules (TPMs) uses Platform Configuration Registers (PCRs) store the secure measurements (hash) of the system. Both methods are static in that the integrity is checked at boot time. Dynamic attestation on the other hand can perform attestation on the current state of the system. Such features are supported by CPU extensions (for example Intel TXT).
El Defrawy et al. propose SMART~\cite{smart:2012}, an efficient hardware-software primitive to establish a dynamic root of trust in an embedded processor, however the authors do not assume any hardware attack.

\subsubsection{VM based Integrity checker}
OSck~\cite{osck:2011} proposed by Hofmann et. al. is a KVM based kernel integrity checker that inspects kernel data structures and text to detect rootkits. The checker runs as a guest OS thread but is isolated by the hypervisor. Most VMM introspection intergrity checker assume a trusted hypervisor. Those techniques are vulnerable to hardware level attacks~\cite{kovah2015senter,PIkit, wojtczuk2009attacking}. In our work we do not have any trust assumption as the attestation device is external to the untrusted machine.

\subsubsection{Checksum Diversity}
Wang et. al.~\cite{Wang:2001} propose using diversity of probe software for security.
The authors obfuscate the control flow by flattening the probing software in order to make it harder for an attacker to reverse engineer the program for evasion.
While the flattened control flow is hard to statically analyze, the programs are susceptible to active learning thus allowing an attacker to adapt over time. Giffin et. al.~\cite{selfchange} propose self-modifying to detect modification of checksum code modification.
The experiments show an overhead of 1 microsecond to each checksum computation, the method is however costly for large programs adding second per check.
The authors in~\cite{WSN:atts:2009} use randomized address checking and memory noise to achieve unpredictability.


\section{Conclusion}
\label{sec:conclusion}
In this work, we presented \power an external integrity checker that uses power measurements as a trust base. The power signal provides an untainted, trusted, and very accurate method for observing the behavior of the untrusted computer.
\power initiates the interrogation protocol with a randomly generated integrity checking program. The diversity of the IC-program prevents the attacker from adapting. We show that the space of IC-programs is impossible to exhaust and that the generation is very efficient for low-power devices.
\power measures the current drawn by the processor during computation and compares it to a learned model to validate the output of the untrusted machine.
We model the interaction between \power and the attacker with as a time continuous game. The attacker disables her malicious activities at randomly chosen time instants in order to evade
\power's integrity checks.
Our simulations show that the attacker trades off stealthiness and her period of inactivity. An attacker that would want to remain stealthy needs to remain inactive for longer periods of time and an increase in her activity period leads to an increase in the probability of her being detected by \power. We also show that even of a stealthy attacker, \power still achieves an acceptable probability of detection given the long lifetime of stealthy APTs; by remaining stealthy, the attacker delays the inevitable and incurs extended periods of inactivity. Going forward we will formally prove the insights that the simulations have shown, and seek to find equilibrium and dominant strategies for \power. 




\bibliographystyle{IEEEtranS}
\bibliography{aaapa}

\begin{thebibliography}{10}
\providecommand{\url}[1]{#1}
\csname url@samestyle\endcsname
\providecommand{\newblock}{\relax}
\providecommand{\bibinfo}[2]{#2}
\providecommand{\BIBentrySTDinterwordspacing}{\spaceskip=0pt\relax}
\providecommand{\BIBentryALTinterwordstretchfactor}{4}
\providecommand{\BIBentryALTinterwordspacing}{\spaceskip=\fontdimen2\font plus
\BIBentryALTinterwordstretchfactor\fontdimen3\font minus
  \fontdimen4\font\relax}
\providecommand{\BIBforeignlanguage}[2]{{%
\expandafter\ifx\csname l@#1\endcsname\relax
\typeout{** WARNING: IEEEtranS.bst: No hyphenation pattern has been}%
\typeout{** loaded for the language `#1'. Using the pattern for}%
\typeout{** the default language instead.}%
\else
\language=\csname l@#1\endcsname
\fi
#2}}
\providecommand{\BIBdecl}{\relax}
\BIBdecl

\bibitem{WSN:atts:2009}
\BIBentryALTinterwordspacing
T.~AbuHmed, N.~Nyamaa, and D.~Nyang, ``Software-based remote code attestation
  in wireless sensor network,'' in \emph{Proceedings of the 28th IEEE
  Conference on Global Telecommunications}, ser. GLOBECOM'09.\hskip 1em plus
  0.5em minus 0.4em\relax Piscataway, NJ, USA: IEEE Press, 2009, pp.
  4680--4687. [Online]. Available:
  \url{http://dl.acm.org/citation.cfm?id=1811982.1812159}
\BIBentrySTDinterwordspacing

\bibitem{framework:atts:2013}
\BIBentryALTinterwordspacing
F.~Armknecht, A.-R. Sadeghi, S.~Schulz, and C.~Wachsmann, ``A security
  framework for the analysis and design of software attestation,'' in
  \emph{Proceedings of the 2013 ACM SIGSAC Conference on Computer \&\#38;
  Communications Security}, ser. CCS '13.\hskip 1em plus 0.5em minus
  0.4em\relax New York, NY, USA: ACM, 2013, pp. 1--12. [Online]. Available:
  \url{http://doi.acm.org/10.1145/2508859.2516650}
\BIBentrySTDinterwordspacing

\bibitem{stoch}
U.~N. Bhat and G.~K. Miller, \emph{Elements of applied stochastic
  processes}.\hskip 1em plus 0.5em minus 0.4em\relax J. Wiley, 1972.

\bibitem{langsec:input:14}
S.~Bratus, T.~Darley, M.~Locasto, M.~L. Patterson, R.~bx~Shapiro, and
  A.~Shubina, ``Beyond planted bugs in \"trusting trust\" the input-processing
  frontier,'' \emph{IEEE Security \& Privacy}, vol.~12, no.~1, pp. 83--87,
  2014.

\bibitem{ROP:hide:2009}
\BIBentryALTinterwordspacing
C.~Castelluccia, A.~Francillon, D.~Perito, and C.~Soriente, ``On the difficulty
  of software-based attestation of embedded devices,'' in \emph{Proceedings of
  the 16th ACM Conference on Computer and Communications Security}, ser. CCS
  '09.\hskip 1em plus 0.5em minus 0.4em\relax New York, NY, USA: ACM, 2009, pp.
  400--409. [Online]. Available:
  \url{http://doi.acm.org/10.1145/1653662.1653711}
\BIBentrySTDinterwordspacing

\bibitem{aging:nsa}
J.~R. Celaya, P.~Wysocki, V.~Vashchenko, S.~Saha, and K.~Goebel, ``Accelerated
  aging system for prognostics of power semiconductor devices,'' in \emph{2010
  IEEE AUTOTESTCON}, Sept 2010, pp. 1--6.

\bibitem{mobius}
G.~Clark, T.~Courtney, D.~Daly, D.~Deavours, S.~Derisavi, J.~M. Doyle, W.~H.
  Sanders, and P.~Webster, ``The mobius modeling tool,'' in \emph{Petri Nets
  and Performance Models, 2001. Proceedings. 9th International Workshop on},
  2001, pp. 241--250.

\bibitem{clark:whattsUpDoc}
\BIBentryALTinterwordspacing
S.~S. Clark, B.~Ransford, A.~Rahmati, S.~Guineau, J.~Sorber, W.~Xu, and K.~Fu,
  ``Wattsupdoc: Power side channels to nonintrusively discover untargeted
  malware on embedded medical devices,'' in \emph{Presented as part of the 2013
  USENIX Workshop on Health Information Technologies}.\hskip 1em plus 0.5em
  minus 0.4em\relax Berkeley, CA: USENIX, 2013. [Online]. Available:
  \url{https://www.usenix.org/conference/healthtech13/workshop-program/presentation/Clark}
\BIBentrySTDinterwordspacing

\bibitem{killchain}
\BIBentryALTinterwordspacing
L.~M. Corporation, ``White paper: Seven ways to apply the cyber kill chain®
  with a threat intelligence platform,'' Lookheed Martin Corp., Tech. Rep.,
  2015. [Online]. Available:
  \url{http://cyber.lockheedmartin.com/solutions/cyber-kill-chain}
\BIBentrySTDinterwordspacing

\bibitem{secboot}
D.~L. Davis, ``Secure boot,'' Aug.~10 1999, uS Patent 5,937,063.

\bibitem{smart:2012}
\BIBentryALTinterwordspacing
K.~{E}ldefrawy, A.~{F}rancillon, D.~{P}erito, and G.~{T}sudik, ``{SMART}:
  {S}ecure and {M}inimal {A}rchitecture for ({E}stablishing a {D}ynamic) {R}oot
  of {T}rust,'' in \emph{{NDSS} 2012, 19th {A}nnual {N}etwork and {D}istributed
  {S}ystem {S}ecurity {S}ymposium, {F}ebruary 5-8, {S}an {D}iego, {USA}}, {S}an
  {D}iego, {UNITED} {STATES}, 02 2012. [Online]. Available:
  \url{http://www.eurecom.fr/publication/3536}
\BIBentrySTDinterwordspacing

\bibitem{min:atts:2014}
\BIBentryALTinterwordspacing
A.~Francillon, Q.~Nguyen, K.~B. Rasmussen, and G.~Tsudik, ``A minimalist
  approach to remote attestation,'' in \emph{Proceedings of the Conference on
  Design, Automation \& Test in Europe}, ser. DATE '14.\hskip 1em plus 0.5em
  minus 0.4em\relax 3001 Leuven, Belgium, Belgium: European Design and
  Automation Association, 2014, pp. 244:1--244:6. [Online]. Available:
  \url{http://dl.acm.org/citation.cfm?id=2616606.2616905}
\BIBentrySTDinterwordspacing

\bibitem{Gao:1997}
\BIBentryALTinterwordspacing
S.~Gao and D.~Panario, ``Tests and constructions of irreducible polynomials
  over finite fields,'' in \emph{Selected Papers of a Conference on Foundations
  of Computational Mathematics}, ser. FoCM '97.\hskip 1em plus 0.5em minus
  0.4em\relax New York, NY, USA: Springer-Verlag New York, Inc., 1997, pp.
  346--361. [Online]. Available:
  \url{http://dl.acm.org/citation.cfm?id=270376.270489}
\BIBentrySTDinterwordspacing

\bibitem{selfchange}
\BIBentryALTinterwordspacing
J.~T. Giffin, M.~Christodorescu, and L.~Kruger, ``Strengthening software
  self-checksumming via self-modifying code,'' in \emph{Proceedings of the 21st
  Annual Computer Security Applications Conference}, ser. ACSAC '05.\hskip 1em
  plus 0.5em minus 0.4em\relax Washington, DC, USA: IEEE Computer Society,
  2005, pp. 23--32. [Online]. Available:
  \url{http://dx.doi.org/10.1109/CSAC.2005.53}
\BIBentrySTDinterwordspacing

\bibitem{aging:p1}
B.~Greskamp, S.~R. Sarangi, and J.~Torrellas, ``Threshold voltage variation
  effects on aging-related hard failure rates,'' in \emph{IEEE International
  Symposium on Circuits and Systems}, May 2007, pp. 1261--1264.

\bibitem{phase:2015}
\BIBentryALTinterwordspacing
J.~M. Hern\'{a}ndez, A.~Ferber, S.~Prowell, and L.~Hively, ``Phase-space
  detection of cyber events,'' in \emph{Proceedings of the 10th Annual Cyber
  and Information Security Research Conference}, ser. CISR '15.\hskip 1em plus
  0.5em minus 0.4em\relax New York, NY, USA: ACM, 2015, pp. 13:1--13:4.
  [Online]. Available: \url{http://doi.acm.org/10.1145/2746266.2746279}
\BIBentrySTDinterwordspacing

\bibitem{osck:2011}
\BIBentryALTinterwordspacing
O.~S. Hofmann, A.~M. Dunn, S.~Kim, I.~Roy, and E.~Witchel, ``Ensuring operating
  system kernel integrity with osck,'' in \emph{Proceedings of the Sixteenth
  International Conference on Architectural Support for Programming Languages
  and Operating Systems}, ser. ASPLOS XVI.\hskip 1em plus 0.5em minus
  0.4em\relax New York, NY, USA: ACM, 2011, pp. 279--290. [Online]. Available:
  \url{http://doi.acm.org/10.1145/1950365.1950398}
\BIBentrySTDinterwordspacing

\bibitem{apt:holmes}
\BIBentryALTinterwordspacing
A.~Juels and T.-F. Yen, ``Sherlock holmes and the case of the advanced
  persistent threat,'' in \emph{Presented as part of the 5th USENIX Workshop on
  Large-Scale Exploits and Emergent Threats}.\hskip 1em plus 0.5em minus
  0.4em\relax Berkeley, CA: USENIX, 2012. [Online]. Available:
  \url{https://www.usenix.org/conference/leet12/sherlock-holmes-and-case-advanced-persistent-threat}
\BIBentrySTDinterwordspacing

\bibitem{Kim:2008}
\BIBentryALTinterwordspacing
H.~Kim, J.~Smith, and K.~G. Shin, ``Detecting energy-greedy anomalies and
  mobile malware variants,'' in \emph{Proceedings of the 6th International
  Conference on Mobile Systems, Applications, and Services}, ser. MobiSys
  '08.\hskip 1em plus 0.5em minus 0.4em\relax New York, NY, USA: ACM, 2008, pp.
  239--252. [Online]. Available:
  \url{http://doi.acm.org/10.1145/1378600.1378627}
\BIBentrySTDinterwordspacing

\bibitem{powerprof}
M.~B. Kj{\ae}rgaard and H.~Blunck, \emph{Unsupervised Power Profiling for
  Mobile Devices}.\hskip 1em plus 0.5em minus 0.4em\relax Berlin, Heidelberg:
  Springer Berlin Heidelberg, 2012, pp. 138--149.

\bibitem{pioneer:2012}
X.~Kovah, C.~Kallenberg, C.~Weathers, A.~Herzog, M.~Albin, and J.~Butterworth,
  ``New results for timing-based attestation,'' in \emph{2012 IEEE Symposium on
  Security and Privacy}, May 2012, pp. 239--253.

\bibitem{kovah2015senter}
X.~Kovah, C.~Kallenberg, J.~Butterworth, and S.~Cornwell, ``Senter sandman:
  Using intel txt to attack bioses,'' \emph{Hack in the Box}, 2015.

\bibitem{Li:2011}
\BIBentryALTinterwordspacing
Y.~Li, J.~M. McCune, and A.~Perrig, ``Viper: Verifying the integrity of
  peripherals' firmware,'' in \emph{Proceedings of the 18th ACM Conference on
  Computer and Communications Security}, ser. CCS '11.\hskip 1em plus 0.5em
  minus 0.4em\relax New York, NY, USA: ACM, 2011, pp. 3--16. [Online].
  Available: \url{http://doi.acm.org/10.1145/2046707.2046711}
\BIBentrySTDinterwordspacing

\bibitem{Miller:1968:response}
\BIBentryALTinterwordspacing
R.~B. Miller, ``Response time in man-computer conversational transactions,'' in
  \emph{Proceedings of the December 9-11, 1968, Fall Joint Computer Conference,
  Part I}, ser. AFIPS '68 (Fall, part I).\hskip 1em plus 0.5em minus
  0.4em\relax New York, NY, USA: ACM, 1968, pp. 267--277. [Online]. Available:
  \url{http://doi.acm.org/10.1145/1476589.1476628}
\BIBentrySTDinterwordspacing

\bibitem{Pathak:2011:FPM}
\BIBentryALTinterwordspacing
A.~Pathak, Y.~C. Hu, M.~Zhang, P.~Bahl, and Y.-M. Wang, ``Fine-grained power
  modeling for smartphones using system call tracing,'' in \emph{Proceedings of
  the Sixth Conference on Computer Systems}, ser. EuroSys '11.\hskip 1em plus
  0.5em minus 0.4em\relax New York, NY, USA: ACM, 2011, pp. 153--168. [Online].
  Available: \url{http://doi.acm.org/10.1145/1966445.1966460}
\BIBentrySTDinterwordspacing

\bibitem{san}
W.~H. Sanders and J.~F. Meyer, \emph{Stochastic Activity Networks: Formal
  Definitions and Concepts⋆}.\hskip 1em plus 0.5em minus 0.4em\relax Berlin,
  Heidelberg: Springer Berlin Heidelberg, 2001, pp. 315--343.

\bibitem{pioneer:2005}
\BIBentryALTinterwordspacing
A.~Seshadri, M.~Luk, E.~Shi, A.~Perrig, L.~van Doorn, and P.~Khosla, ``Pioneer:
  Verifying code integrity and enforcing untampered code execution on legacy
  systems,'' in \emph{Proceedings of the Twentieth ACM Symposium on Operating
  Systems Principles}, ser. SOSP '05.\hskip 1em plus 0.5em minus 0.4em\relax
  New York, NY, USA: ACM, 2005, pp. 1--16. [Online]. Available:
  \url{http://doi.acm.org/10.1145/1095810.1095812}
\BIBentrySTDinterwordspacing

\bibitem{Shaneck2005}
\BIBentryALTinterwordspacing
M.~Shaneck, K.~Mahadevan, V.~Kher, and Y.~Kim, \emph{Remote Software-Based
  Attestation for Wireless Sensors}.\hskip 1em plus 0.5em minus 0.4em\relax
  Berlin, Heidelberg: Springer Berlin Heidelberg, 2005, pp. 27--41. [Online].
  Available: \url{http://dx.doi.org/10.1007/11601494\_3}
\BIBentrySTDinterwordspacing

\bibitem{ntl}
\BIBentryALTinterwordspacing
V.~Shoup. (2016) Ntl: A library for doing number theory. [Online]. Available:
  \url{http://www.shoup.net/ntl/}
\BIBentrySTDinterwordspacing

\bibitem{PIkit}
\BIBentryALTinterwordspacing
W.~Song, H.~Choi, J.~Kim, E.~Kim, Y.~Kim, and J.~Kim, ``Pikit: A new
  kernel-independent processor-interconnect rootkit,'' in \emph{25th USENIX
  Security Symposium (USENIX Security 16)}.\hskip 1em plus 0.5em minus
  0.4em\relax Austin, TX: USENIX Association, Aug. 2016, pp. 37--51. [Online].
  Available:
  \url{https://www.usenix.org/conference/usenixsecurity16/technical-sessions/presentation/song}
\BIBentrySTDinterwordspacing

\bibitem{flipit}
\BIBentryALTinterwordspacing
M.~van Dijk, A.~Juels, A.~Oprea, and R.~L. Rivest, ``Flipit: The game of
  ``stealthy takeover'','' \emph{Journal of Cryptology}, vol.~26, no.~4, pp.
  655--713, 2013. [Online]. Available:
  \url{http://dx.doi.org/10.1007/s00145-012-9134-5}
\BIBentrySTDinterwordspacing

\bibitem{Wang:2001}
\BIBentryALTinterwordspacing
C.~Wang, J.~Hill, J.~C. Knight, and J.~W. Davidson, ``Protection of
  software-based survivability mechanisms,'' in \emph{Proceedings of the 2001
  International Conference on Dependable Systems and Networks (Formerly:
  FTCS)}, ser. DSN '01.\hskip 1em plus 0.5em minus 0.4em\relax Washington, DC,
  USA: IEEE Computer Society, 2001, pp. 193--202. [Online]. Available:
  \url{http://dl.acm.org/citation.cfm?id=647882.738073}
\BIBentrySTDinterwordspacing

\bibitem{wojtczuk2009attacking}
R.~Wojtczuk and J.~Rutkowska, ``Attacking smm memory via intel cpu cache
  poisoning,'' \emph{Invisible Things Lab}, 2009.

\bibitem{zhang2003hotleakage}
Y.~Zhang, D.~Parikh, K.~Sankaranarayanan, K.~Skadron, and M.~Stan,
  ``Hotleakage: A temperature-aware model of subthreshold and gate leakage for
  architects,'' \emph{University of Virginia Dept. of Computer Science
  Technical Report}, 2003.

\end{thebibliography}

\end{document}